\newtheorem{theorem}{Theorem}
\newtheorem{corollary}{Corollary}
\newtheorem{remark}{Remark}
\newtheorem{assumption}{Assumption}
\newcommand{\E}{\mathbb{E}}
\newcommand{\independent}{\perp\mkern-9.5mu\perp}
\newcommand{\stkout}[1]{\ifmmode\text{\sout{\ensuremath{#1}}}\else\sout{#1}\fi}
\title{\bf \Large Partial Identification of Causal Effects Using Proxy Variables\\~\\} 
\author[1]{AmirEmad Ghassami}
\author[2]{Yunshu Zhang}
\author[3]{Ilya Shpitser}
\author[2]{Eric Tchetgen Tchetgen}
\affil[1]{Department of Mathematics and Statistics, Boston University}
\affil[2]{Department of Statistics and Data Science, University of Pennsylvania}
\affil[3]{Department of Computer Science, Johns Hopkins University}
\date{\vspace{-0mm}First Version: April 10, 2023; Current Version: August 05, 2026\vspace{-0mm}}
\begin{document}
\maketitle

\begin{abstract}
Proximal causal inference is a framework for evaluating the causal effects in the presence of unmeasured confounding. For point identification, it leverages a pair of proxy variables to identify a bridge function that matches the dependence of potential outcomes or treatment variables on the hidden factors to corresponding functions of observed proxies. Unique identification requires that proxies are sufficiently relevant for hidden factors, a requirement that has previously been formalized as a completeness condition. However, completeness is not empirically testable, and although a bridge function may be well-defined in a given setting, lack of completeness, sometimes manifested by availability of a single type of proxy, may severely limit prospects for identification of a bridge function and thus a causal effect; therefore, potentially restricting the application of the framework. In this paper, we propose partial identification methods that do not require completeness and obviate the need for identification of a bridge function. We establish that proxies can be leveraged to obtain bounds on the causal effect even if available information does not suffice to identify either a bridge function or a corresponding causal effect of interest. Our bounds are non-smooth functionals of the underlying distribution. For inference, we employ LogSumExp approximations that yield smooth lower and upper bounds, and we derive the efficient influence functions of the resulting bound functionals which enable analytic variance estimation, while bootstrap confidence intervals remain available for regular plug-in implementations. We further establish analogous results in related settings where identification hinges upon hidden mediators for which proxies are available, however such proxies are not sufficiently rich for point identification of a bridge function or a corresponding causal effect of interest.\\

\noindent \textbf{Keywords:} Causal Effect; Partial Identification; Proximal Causal Inference; Unobserved Confounders; Unobserved Mediators\\
\end{abstract}

\newpage

\section{Introduction}
\label{sec:intro}

Evaluation of the causal effect of a certain treatment variable on an outcome variable of interest from purely observational data is the main focus in many scientific endeavors. One of the most common identification conditions for causal inference from observational data is that of conditional exchangeability. The assumption essentially presumes that the researcher has collected a sufficiently rich set of pre-treatment covariates, such that within the covariate strata, it is as if the treatment were assigned randomly. Unfortunately, this assumption is violated in many real-world settings as it essentially requires that there should not exist any unobserved common causes of the treatment-outcome relation (i.e., no unobserved confounders).

To address the challenge of unobserved confounders, the proximal causal inference framework was recently introduced by Tchetgen Tchetgen and colleagues \citep{tchetgen2020introduction,miao2018identifying}. This framework shows that identification of the causal effect of the treatment on outcome in the presence of unobserved confounders still is sometimes feasible provided that one has access to two types of proxy variables of the unobserved confounders, a treatment confounding proxy and an outcome confounding proxy that satisfy certain assumptions. The proximal causal inference framework was recently extended to several other challenging causal inference settings such longitudinal data \citep{10.1093/jrsssb/qkad020}, mediation analysis \citep{10.1093/biomet/asad015,ghassami2021hidmed}, outcome-dependent sampling \citep{LiQ}, network interference settings \citep{egami2023identification}, graphical causal models \citep{shpitser2023proximal}, and causal data fusion \citep{ghassami2022combining}. It has also recently been shown that under an alternative somewhat stronger set of assumptions with regards to proxy relevance, identification is sometimes possible using a single proxy variable \citep{tchetgen2023single}.

A key identification condition of proximal causal inference is an assumption that proxies are sufficiently relevant for hidden confounding factors so that there exist a so-called confounding bridge function, defined in terms of proxies, that matches the association between hidden factors and either the potential outcomes or the treatment variable; identification of which is an important step towards identification of causal effects via proxies. However, non-parametric identification of such a bridge function, when it exists, has to date involved a completeness condition which, roughly speaking, requires that variation in proxy variables reflects all sources of variation of unobserved confounders \citep{tchetgen2020introduction}. 
However, completeness is a strong condition which may significantly limit the researcher's choice of proxy variables, and hence the feasibility of proximal causal inference in important practical settings where the assumption cannot reasonably be assumed to hold. For instance, in many settings, only one of the two required types of proxies may be available rendering point identification infeasible, even if the proxies are highly relevant. More broadly, point identification of causal effects using existing proximal causal inference methods crucially relies on identification of a confounding bridge function, and remains possible even if the latter is only set-identified \citep{zhang2023proximal, bennett2023minimax}. Notably, failure to identify such a confounding bridge function is generally inevitable without a completeness condition; in which case partial identification of causal effects may be the most one can hope for.

In this paper, we propose partial identification methods for causal parameters using proxy variables in settings where completeness assumptions do not necessarily hold. Therefore, our methods obviate the need for identification of bridge functions which are solutions to integral equations that involve unobserved variables. We provide methods in single proxy settings and demonstrate that certain conditional independence requirements in that setting can be relaxed in case that the researcher has access to a second proxy variable. For the causal parameter of interest, we initially focus on the average treatment effect (ATE) and the effect of the treatment on the treated (ETT) in Sections \ref{sec:singleATE} and \ref{sec:doubleATE}. Then we extend our results in Section \ref{sec:HidMed} to partial identification in related settings where identification hinges upon hidden mediators for which proxies are available, however such proxies fail to satisfy a completeness condition which would in principle guarantee point identification of the causal effect of interest; such settings include causal mediation with a mis-measured mediator and hidden front-door models. Our bounds are non-smooth functionals of the observed data distribution. In Section \ref{sec:est}, we construct smooth lower and upper bounds using LogSumExp approximations and derive their efficient influence functions for both conditional and marginal potential outcome means. The influence-function results support one-step estimators and analytic standard errors, while regular plug-in estimators may also be handled by the bootstrap.

\section{Partial Identification Using a Single Proxy}
\label{sec:singleATE}

Consider a setting with a binary treatment variable $A\in\{0,1\}$, an outcome variable $Y\in\mathcal{Y}\subseteq[0,+\infty)$,\footnote{We use capital calligraphic letters to denote the alphabet of the corresponding random variable.} and observed and unobserved confounders denoted by $X$ and $U$, respectively.  
We are interested in identifying the effect of the treatment on the treated (ETT), which is defined as
\[
\theta_{ETT}=\E[Y^{(A=1)}\mid A=1]-\E[Y^{(A=0)}\mid A=1],
\]
as well as the average treatment effect (ATE) of $A$ on $Y$ defined as
\[
\theta_{ATE}=\E[Y^{(A=1)}]-\E[Y^{(A=0)}],
\]
where $Y^{(A=a)}$ is the potential outcome variable representing the outcome had the treatment (possibly contrary to fact) been set to value $a$. Due to the presence of the latent confounder $U$ in the system, without any extra assumptions, ETT and ATE are not point identified. Therefore, we focus on partial identification of  conditional potential outcome mean parameters, $\E[Y^{(A=a)}\mid A=1-a]$, and marginal potential outcome mean, $\E[Y^{(A=a)}]$, for $a\in\{0,1\}$.

\subsection{Partial Identification Using an Outcome Confounding Proxy}
\label{sec:singleATE:outcome}

Let $W$ be a proxy of the unobserved confounder, which can be say an error-prone measurement of $U$, or a so-called negative control outcome \citep{lipsitch2010negative,shi2020selective}. We show that under some requirements on $W$, the proxy can be leveraged to obtain bounds on the parameters $\E[Y^{(A=a)}\mid A=1-a]$ and $\E[Y^{(A=a)}]$. Our requirements on $W$ are as follows.

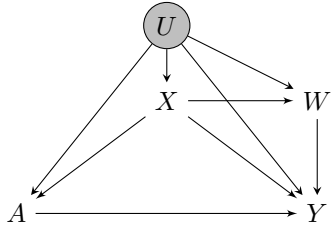
\begin{figure}[t!]
\centering
		\tikzstyle{block} = [draw, circle, inner sep=2.5pt, fill=lightgray]
		\tikzstyle{input} = [coordinate]
		\tikzstyle{output} = [coordinate]
        \begin{tikzpicture}
            \tikzset{edge/.style = {->,> = latex'}}
            \node[] (a) at  (-2,0) {$A$};
            \node[block] (u) at  (0,2.5) {$U$};
            \node[] (y) at  (2,0) {$Y$};
            \node[] (x) at  (0,1.5) {$X$};
            \node[] (w) at  (2,1.5) {$W$}; 
            \draw[-stealth] (u) to (a);
			\draw[-stealth] (u) to (y);
			\draw[-stealth][edge] (a) to (y);
			\draw[-stealth][edge] (x) to (a);			
            \draw[-stealth] (u) to (x);
            \draw[-stealth][edge] (x) to (y);
            \draw[-stealth][edge] (x) to (w);            
            \draw[-stealth] (u) to (w);  
            \draw[-stealth] (w) to (y);                                                          
        \end{tikzpicture}
        \caption{Example of a graphical model satisfying Assumption \ref{ass:indepM1}.}
        \label{fig:ATEsingle}
\end{figure}

\begin{assumption}
\label{ass:indepM1}
The proxy variable is independent of the treatment variable conditional on the confounders, i.e., $W\independent A\mid \{X,U\}$.	
\end{assumption}
We refer to a proxy variable $W$ satisfying Assumption \ref{ass:indepM1} as an outcome confounding proxy variable.

\begin{assumption}
\label{ass:hexistsM1}
There exists a non-negative bridge function $h$ such that almost surely
\[
\E[Y\mid A,X,U]=\E[h(W,A,X)\mid A,X,U].
\]
\end{assumption}
Assumption \ref{ass:indepM1} indicates how the proxy variable is related to other variables in the system. Mainly it states that as a proxy for $U$, $W$ would become irrelevant about the treatment process, and therefore not needed for confounding control had $U$ been observed. This is clearly a non-testable assumption which must therefore be grounded in prior expert knowledge. Figure \ref{fig:ATEsingle} demonstrates an example of a graphical model that satisfies Assumption \ref{ass:indepM1}.
 Assumption \ref{ass:hexistsM1} requires the existence of a so-called outcome confounding bridge function which depends on $W$ and whose conditional expectation matches the potential outcome mean, given $U$. The assumption formalizes the idea that $W$ is sufficiently rich so that there exists at least one transformation of the latter which can on average recover the potential outcome as a function of unmeasured confounders. The assumption was initially introduced by \cite{miao2018confounding} who give sufficient completeness and other regularity conditions for point identification of such a bridge function.  Note that we deliberately avoid such assumptions, and therefore cannot generally identify or estimate such a bridge function from observational data, because the integral equation defined in Assumption \ref{ass:hexistsM1} involves the unobserved confounder $U$. 
Besides Assumptions \ref{ass:indepM1} and \ref{ass:hexistsM1}, we also have the following requirements on the variables.
\begin{assumption}
\label{ass:usual}
~\\
\vspace{-7mm}
\begin{itemize}
\item For all values $a$, we have $Y^{(A=a)}\independent A\mid \{X,U\}$.	
\item For all values $a$, we have $p(A=a\mid X,U)>0$.
\item If $A=a$, we have $Y^{(A=a)}=Y$.
\end{itemize}

\end{assumption}
Part 1 of Assumption \ref{ass:usual} requires independence of the potential outcome variable and the treatment variable conditioned on both observed and unobserved confounders. Note that this is a much milder assumption compared to the standard conditional exhchangeability assumption as we also condition on the unobserved confounders. Parts 2 and 3 of Assumption \ref{ass:usual} are the positivity and consistency assumptions, which are standard assumptions in the field of causal identification.

 We have the following partial identification result for the conditional potential outcome mean.

\begin{theorem}
\label{thm:ETTsingleW}
Under Assumptions \ref{ass:indepM1}-\ref{ass:usual}, the parameter $\E[Y^{(A=a)}\mid A=1-a]$ can be bounded as follows.
\begin{align*}
&\max\Big\{\inf\mathcal{Y}~,~\E\Big[\min_w\frac{p(w\mid A=1-a,X)}{p(w\mid A=a,X)}\E[Y\mid A=a,X]\Big| A=1-a\Big]\Big\}\\
&\le\E[Y^{(A=a)}\mid A=1-a]\le\\
&\min\Big\{\sup\mathcal{Y}~,~\E\Big[\max_w\frac{p(w\mid A=1-a,X)}{p(w\mid A=a,X)}\E[Y\mid A=a,X]\Big| A=1-a\Big]\Big\}.
\end{align*}
\end{theorem}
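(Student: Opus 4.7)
The plan is to reduce the counterfactual conditional mean $\E[Y^{(A=a)} \mid A=1-a]$ to a functional of the bridge function $h$ and then bound that functional using the constraint that $h$ must satisfy, together with non-negativity.

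\textbf{Step 1: Rewrite the target parameter in terms of $h$.} I first iterate the expectation over $U$ and use the three parts of Assumption~\ref{ass:usual} together with Assumptions~\ref{ass:indepM1} and~\ref{ass:hexistsM1} to show, for each stratum of $X$,
\[
\E[Y^{(A=a)}\mid A=1-a,X] \;=\; \E[h(W,a,X)\mid A=1-a,X].
\]
The key sequence is: conditional exchangeability and consistency give $\E[Y^{(A=a)}\mid X,U]=\E[Y\mid A=a,X,U]$; Assumption~\ref{ass:hexistsM1} replaces this with $\E[h(W,a,X)\mid A=a,X,U]$; Assumption~\ref{ass:indepM1} removes the conditioning on $A$ inside; finally, iterated expectation conditioning on $(X,U,A=1-a)$ and then on $(X,A=1-a)$ yields the identity above.

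\textbf{Step 2: Extract the constraint on $h$.} By the same two assumptions, averaging the bridge equation in Assumption~\ref{ass:hexistsM1} over $U$ given $A=a,X$ gives the observable moment condition
\[
\E[Y\mid A=a,X] \;=\; \E[h(W,a,X)\mid A=a,X].
\]

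\textbf{Step 3: Point-wise bounds via a likelihood-ratio change of measure.} Fix $X$ and write $f(w)=h(w,a,X)\ge 0$, $q_j(w)=p(w\mid A=j,X)$, $c(X)=\E[Y\mid A=a,X]$. Then
\[
\E[h(W,a,X)\mid A=1-a,X] \;=\; \int f(w)\,\frac{q_{1-a}(w)}{q_a(w)}\,q_a(w)\,dw,
\]
and the constraint from Step~2 states $\int f(w)\,q_a(w)\,dw=c(X)$. Since $f\,q_a$ is a non-negative measure of total mass $c(X)$, the elementary inequality that the integral of a function against a finite non-negative measure lies between the mass times the infimum and the mass times the supremum of that function yields
\[
c(X)\min_w \frac{q_{1-a}(w)}{q_a(w)} \;\le\; \E[h(W,a,X)\mid A=1-a,X] \;\le\; c(X)\max_w \frac{q_{1-a}(w)}{q_a(w)}.
\]

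\textbf{Step 4: Assemble the bounds.} Taking $\E[\cdot\mid A=1-a]$ on both sides and invoking Step~1 produces the proxy-based bounds in the theorem. The $\inf\mathcal Y$ and $\sup\mathcal Y$ bounds are immediate because $Y^{(A=a)}\in\mathcal Y$. Intersecting the two sets of bounds (taking $\max$ of the lower ones and $\min$ of the upper ones) gives the stated inequality.

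The main obstacle I anticipate is Step~1: chaining Assumptions~\ref{ass:indepM1}, \ref{ass:hexistsM1}, and~\ref{ass:usual} in the right order to move from the counterfactual quantity under $A=1-a$ to an observable conditional expectation of $h$ under $A=1-a$. Once that identity is in hand, the optimization step is just the observation that a non-negative measure of fixed total mass has its integral against any function bounded by mass-times-min and mass-times-max. One should also remark (implicitly) that the likelihood ratio is assumed well-defined, which is standard under an overlap condition on the proxy conditional on $(X,A)$.
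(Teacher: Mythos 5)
Your proposal is correct and follows essentially the same route as the paper's proof: derive $\E[Y^{(A=a)}\mid A=1-a]=\E[h(W,a,X)\mid A=1-a]$ via exchangeability, consistency, the bridge equation, and $W\independent A\mid\{X,U\}$; then bound the resulting integral using non-negativity of $h$, the marginalized moment condition $\E[Y\mid A=a,X]=\E[h(W,a,X)\mid A=a,X]$, and the min/max of the likelihood ratio in $w$; and finally intersect with the trivial support bounds. The only cosmetic difference is that you work stratum-by-stratum in $X$ and phrase the bounding step as an explicit change-of-measure inequality, which matches the paper's summation argument exactly.
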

 
 We have the following corollary for partial identification of marginal potential outcome mean.
  
\begin{corollary}
\label{cor:ETTtoATE}
Under Assumptions \ref{ass:indepM1}-\ref{ass:usual}, the parameter $\E[Y^{(A=a)}]$ can be bounded as follows.
\begin{align*}
&\max\Big\{\inf\mathcal{Y}\times p(A=1-a)+\E[Y\mid A=a]p(A=a),\E\Big[\frac{I(A=a)}{\max_w p(a\mid w,X)}Y\Big]\Big\}\\
&\le\E[Y^{(A=a)}]\le\\
&\min\Big\{\sup\mathcal{Y}\times p(A=1-a)+\E[Y\mid A=a]p(A=a),\E\Big[\frac{I(A=a)}{\min_w p(a\mid w,X)}Y\Big]\Big\},
\end{align*}
where $I(\cdot)$ is the indicator function.
\end{corollary}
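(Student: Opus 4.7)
The plan is to reduce this corollary to Theorem \ref{thm:ETTsingleW} via the marginalization identity
\[
\E[Y^{(A=a)}] = \E[Y^{(A=a)}\mid A=a]\,p(A=a) + \E[Y^{(A=a)}\mid A=1-a]\,p(A=1-a).
\]
By the consistency clause of Assumption \ref{ass:usual}, the first summand equals the observable $\E[Y\mid A=a]\,p(A=a)$, so bounding $\E[Y^{(A=a)}]$ amounts to inserting the bounds on $\E[Y^{(A=a)}\mid A=1-a]$ from Theorem \ref{thm:ETTsingleW} into the second summand. Applying the trivial part of the Theorem's bound, $\inf\mathcal{Y}\le \E[Y^{(A=a)}\mid A=1-a]\le\sup\mathcal{Y}$, yields directly the first element in the $\max/\min$ pair stated in the Corollary.

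For the second element, I would rewrite the density ratio appearing in Theorem \ref{thm:ETTsingleW} via Bayes' rule as
\[
\frac{p(w\mid A=1-a,X)}{p(w\mid A=a,X)} = \frac{p(A=1-a\mid w,X)}{p(A=a\mid w,X)}\cdot\frac{p(A=a\mid X)}{p(A=1-a\mid X)},
\]
and use that for binary $A$, $\max_w p(A=1-a\mid w,X)/p(A=a\mid w,X) = 1/\min_w p(A=a\mid w,X) - 1$. Multiplying the Theorem's upper bound by $p(A=1-a)$ converts $\E[\cdot\mid A=1-a]\,p(A=1-a)$ into the integral $\int(\cdot)\,p(A=1-a\mid X)\,p(X)\,dX$, so the factor $p(A=1-a\mid X)$ from Bayes cancels and one is left with
\[
\int\Bigl(\tfrac{1}{\min_w p(a\mid w,X)}-1\Bigr)p(a\mid X)\E[Y\mid A=a,X]\,p(X)\,dX.
\]
Using $p(a\mid X)\E[Y\mid A=a,X] = \E[I(A=a)Y\mid X]$ and distributing the difference gives $\E\bigl[\tfrac{I(A=a)}{\min_w p(a\mid w,X)}Y\bigr] - p(A=a)\E[Y\mid A=a]$. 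Adding back the $\E[Y\mid A=a]\,p(A=a)$ term from the decomposition cancels the constant, yielding exactly the upper bound claimed. The lower bound is obtained symmetrically by swapping $\max$ with $\min$ (and noting that $\min_w p(A=1-a\mid w,X)/p(A=a\mid w,X) = 1/\max_w p(A=a\mid w,X) - 1$).

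The main bookkeeping obstacle will be the binary-$A$ identity that converts a maximum (or minimum) of an odds-ratio in $w$ into a minimum (or maximum) of a single probability, together with the observation that the leftover constant from the ``$-1$'' contribution exactly cancels the $\E[Y\mid A=a]\,p(A=a)$ piece supplied by the $A=a$ stratum in the decomposition. Once this cancellation is recognised, the bound collapses into the clean inverse-probability-weighted form stated in the Corollary.
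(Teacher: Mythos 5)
Your proposal is correct and follows essentially the same route as the paper's own proof: the same stratified decomposition of $\E[Y^{(A=a)}]$, the same Bayes-rule rewriting of the density ratio into an odds ratio of $A$ given $(w,X)$, and the same cancellation of the ``$-1$'' term against the $\E[Y\mid A=a]\,p(A=a)$ contribution from the treated stratum. No gaps.
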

Several remarks are in order regarding the proposed bounds.
\begin{remark}
Note that if there are no unobserved confounders in the system, $A$ and $W$ are independent conditional on $X$ and hence, the upper and lower bound in Theorem \ref{thm:ETTsingleW} and Corollary \ref{cor:ETTtoATE} will match. Especially, those in Corollary \ref{cor:ETTtoATE} will be equal to the standard inverse probability weighting (IPW) identification formula in the conditional exchangeability framework. Furthermore, the bounds can be written as 
\begin{align*}
&\max\Big\{\inf\mathcal{Y}\times p(A=1-a)+\E[Y\mid A=a]p(A=a),	\E\Big[\frac{p(a\mid X)}{\max_w p(a\mid w,X)}\E[Y\mid A=a,X]\Big]\Big\}\\
&\le\E[Y^{(A=a)}]\le\\
&\min\Big\{\sup\mathcal{Y}\times p(A=1-a)+\E[Y\mid A=a]p(A=a),\E\Big[\frac{p(a\mid X)}{\min_w p(a\mid w,X)}\E[Y\mid A=a,X]\Big]\Big\},
\end{align*}
which demonstrates the connection to the g-formula in the conditional exchangeability framework \citep{hernan2020causal}.
Formally, when $A$ and $W$ are independent conditioned on $X$, we have $\E\Big[\frac{p(a\mid X)}{\max_w p(a\mid w,X)}\E[Y\mid A=a,X]\Big]=\E\big[\E[Y\mid A=a,X]\big]$, which is the $g$-formula, and for the lower bound, we have
\begin{align*}
&\inf\mathcal{Y}\times p(A=1-a)+\E[Y\mid A=a]p(A=a)\\
&=\sum_x\E[\inf\mathcal{Y}\mid A=a,X=x]\times p(A=1-a,X=x)+\sum_x\E[Y\mid A=a,X=x]p(A=a,X=x)\\
&\le\sum_x\E[Y\mid A=a,X=x]\times p(A=1-a,X=x)+\sum_x\E[Y\mid A=a,X=x]p(A=a,X=x)\\
&=\E\big[\E[Y\mid A=a,X]\big].
\end{align*}
That is, the lower bound will be the g-formula. Similar argument holds for the upper bound.

\end{remark}

\begin{remark}
In the setting considered in this work we assumed $\mathcal{Y}\subseteq[0,+\infty)$. This assumption is merely for the sake of clarity of the presentation and can be relaxed to assuming the outcome is bounded from below. If $\inf\mathcal{Y}<0$, without loss of generality, for the bridge function $h$, we can have $\inf\mathcal{Y}\le h'_{m}\le h(w,a,x)$, for all $w,a,x$. Let $h_m=|\min\{h'_m,0\}|$. Then the bounding step of the proof of Theorem \ref{thm:ETTsingleW} should be changed to
\begin{align*}
&\sum_{x}\min_w\frac{p(w\mid A=1-a,x)}{p(w\mid A=a,x)}\sum_{w}h(w,a,x)p(w\mid A=a,x)p(x\mid A=1-a)\\
&\qquad+h_m\Big\{\E\Big[\min_w\frac{p(w\mid A=1-a,X)}{p(w\mid A=a,X)}\Big| A=1-a\Big]-1 \Big\}\\
&\le\E[Y^{(A=a)}\mid A=1-a]\le\\
&\sum_{x}\max_w\frac{p(w\mid A=1-a,x)}{p(w\mid A=a,x)}\sum_{w}h(w,a,x)p(w\mid A=a,x)p(x\mid A=1-a)\\
&\qquad+h_m\Big\{\E\Big[\max_w\frac{p(w\mid A=1-a,X)}{p(w\mid A=a,X)}\Big| A=1-a\Big]-1 \Big\}.
\end{align*}
\end{remark}

\begin{remark}
If we are only interested in partially identifying ETT, then we only need bounds for the parameter $\E[Y^{(A=0)}\mid A=1]$. In this case, Assumption \ref{ass:hexistsM1} can be weakened to requiring the existence of a bridge function $h$ such that almost surely
\[
\E[Y\mid A=0,X,U]=\E[h(W,X)\mid X,U].
\]
\end{remark}

\begin{remark}
In case that we have access to several proxy variables $\{W_1,...,W_K\}$ which satisfy Assumptions \ref{ass:indepM1} and \ref{ass:hexistsM1}, a method for improving the bounds would be to construct bounds based on each proxy variable and then choose the upper bound to be the minimum of the upper bounds and choose the lower bound to be the maximum of the lower bounds.	
\end{remark}

\subsection{Partial Identification Using a Treatment Confounding Proxy}
\label{sec:singleATE:treatment}

The proxy variable $W$ is potentially directly associated with the outcome variable but is required to be conditionally independent from the treatment variable. In this subsection, we establish an alternative partial identification method based on a proxy variable $Z$ of the unobserved confounder, which can be directly associated with the treatment variable, but is required to be conditionally independent of the outcome variable. Formally, we have the following requirements on the proxy variable $Z$.

\begin{assumption}
\label{ass:indepM2}
The proxy variable $Z$ is independent of the outcome variable conditional on the confounders and the treatment variable, i.e.,  $Z\independent Y\mid A,X,U$.	
\end{assumption}
We refer to a proxy variable satisfying Assumption \ref{ass:indepM2} as a treatment confounding proxy variable.
Figure \ref{fig:ATEsingle2} demonstrates an example of a graphical model that satisfies Assumption \ref{ass:indepM2}.
\begin{figure}[t!]
\centering
		\tikzstyle{block} = [draw, circle, inner sep=2.5pt, fill=lightgray]
		\tikzstyle{input} = [coordinate]
		\tikzstyle{output} = [coordinate]
        \begin{tikzpicture}
            \tikzset{edge/.style = {->,> = latex'}}
            \node[] (a) at  (-2,0) {$A$};
            \node[block] (u) at  (0,2.5) {$U$};
            \node[] (y) at  (2,0) {$Y$};
            \node[] (x) at  (0,1.5) {$X$};
            \node[] (z) at  (-2,1.5) {$Z$}; 
            \draw[-stealth] (u) to (a);
			\draw[-stealth] (u) to (y);
			\draw[-stealth][edge] (a) to (y);
			\draw[-stealth][edge] (x) to (a);			
            \draw[-stealth] (u) to (x);
            \draw[-stealth][edge] (x) to (y);
            \draw[-stealth][edge] (x) to (z);            
            \draw[-stealth] (u) to (z);  
            \draw[-stealth] (z) to (a);                                                          
        \end{tikzpicture}
        \caption{Example of a graphical model satisfying Assumption \ref{ass:indepM2}.}
        \label{fig:ATEsingle2}
\end{figure}
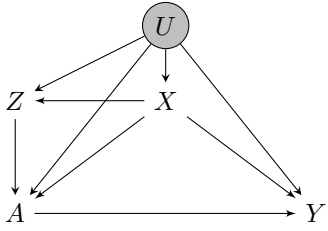

\begin{assumption}
\label{ass:qexistsM2}
There exists a non-negative bridge function $q$ such that almost surely
\[
\E[q(Z,A,X)\mid A,X,U]=\frac{p(U\mid 1-A,X)}{p(U\mid A,X)}.
\]
\end{assumption}
Existence of similar treatment confounding bridge function was first discussed by \cite{deaner2018proxy} and \cite{cui2023semiparametric} who also imposed additional conditions, including completeness, such that the treatment bridge function and causal effects are both uniquely nonparametrically identified. As in the previous section, we do not make such additional assumptions. 

We have the following partial identification result for the conditional potential outcome mean.

\begin{theorem}
\label{thm:ETTsingleZ}
Under Assumptions \ref{ass:usual}-\ref{ass:qexistsM2}, the parameter $\E[Y^{(A=a)}\mid A=1-a]$ can be bounded as follows.
\begin{align*}
&\E\big[\min_z\E[Y\mid z,X,A=a]\big| A=1-a\big]\\
&\le\E\big[Y^{(A=a)}\mid A=1-a]\le\\
&\E\big[\max_z\E[Y\mid z,X,A=a]\big| A=1-a\big].
\end{align*}
\end{theorem}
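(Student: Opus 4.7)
The plan is to express $\E[Y^{(A=a)}\mid A=1-a]$ as $\E\bigl[\E[Y\,q(Z,a,X)\mid A=a,X]\bigm| A=1-a\bigr]$, and then to exploit the fact that the bridge function $q$ and the conditional density of $Z$ given $(A=a,X)$ jointly define a probability distribution over $Z$, rendering the inner conditional expectation a convex combination of $\E[Y\mid A=a,X,Z=z]$ whose extreme values are exactly the stated bounds.

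For the first step, using parts (1) and (3) of Assumption \ref{ass:usual} together with the tower property, I would write
\begin{align*}
\E[Y^{(A=a)}\mid A=1-a]
&= \E\bigl[\E[Y\mid A=a,X,U]\bigm| A=1-a\bigr]\\
&= \int \E[Y\mid A=a,x,u]\,\frac{p(u\mid A=1-a,x)}{p(u\mid A=a,x)}\,p(u\mid A=a,x)\,p(x\mid A=1-a)\,dx\,du.
\end{align*}
By Assumption \ref{ass:qexistsM2}, the ratio equals $\E[q(Z,a,X)\mid A=a,x,u]$, and by Assumption \ref{ass:indepM2}, $Z\independent Y\mid A,X,U$, so the product of the two conditional expectations merges into $\E[Y\,q(Z,a,x)\mid A=a,x,u]$. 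Marginalizing $u$ against $p(u\mid A=a,x)$ and then $x$ against $p(x\mid A=1-a)$ yields
\[
\E[Y^{(A=a)}\mid A=1-a] = \E\bigl[\E[Y\,q(Z,a,X)\mid A=a,X]\bigm| A=1-a\bigr].
\]

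For the second step, iterated expectation over $Z$ gives
\[
\E[Y\,q(Z,a,X)\mid A=a,X=x] = \int q(z,a,x)\,\E[Y\mid A=a,X=x,Z=z]\,p(z\mid A=a,x)\,dz.
\]
The crucial observation is that $q(z,a,x)\,p(z\mid A=a,x)$ is itself a probability density in $z$: non-negativity is given by Assumption \ref{ass:qexistsM2}, and the tower property together with $\int p(u\mid 1-a,x)\,du = 1$ yields $\E[q(Z,a,X)\mid A=a,x]=1$. The inner quantity is therefore a convex combination of $\E[Y\mid A=a,x,Z=z]$ in $z$, and so lies in $[\min_z \E[Y\mid A=a,x,Z=z],\,\max_z \E[Y\mid A=a,x,Z=z]]$. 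Averaging these bounds over the conditional distribution of $X$ given $A=1-a$ produces the stated two-sided bound.

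The main obstacle is the first step, where the ratio of latent densities must be eliminated by fusing $\E[Y\mid A=a,x,u]$ and $\E[q(Z,a,x)\mid A=a,x,u]$ into a single expectation of the product $Y\,q(Z,a,x)$; this is precisely where Assumption \ref{ass:indepM2} does the work, since it is the conditional independence that allows $U$ to be marginalized out and leaves a kernel depending only on observables. Once this identity is in hand, the remainder is essentially a one-sided linear-programming argument on the normalized, non-negative weights $q(z,a,x)\,p(z\mid A=a,x)$.
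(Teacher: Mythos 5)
Your proposal is correct and follows essentially the same route as the paper's proof: consistency plus exchangeability given $(X,U)$ to reach $\E[\E[Y\mid A=a,X,U]\mid A=1-a]$, Assumption \ref{ass:qexistsM2} to replace the latent density ratio by $\E[q(Z,a,X)\mid A=a,X,U]$, Assumption \ref{ass:indepM2} to merge the product into a single expectation and marginalize $U$, and finally the non-negativity and unit-normalization of $q(z,a,x)\,p(z\mid A=a,x)$ to bound the resulting convex combination by $\min_z$ and $\max_z$ of $\E[Y\mid z,X,A=a]$. Your packaging of the last step as a convex-combination argument is just a slightly tidier statement of the paper's pull-out-the-extremum-then-normalize step.
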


 We have the following corollary for partial identification of marginal potential outcome mean.
  
\begin{corollary}
\label{cor:ETTtoATEZ}
Under Assumptions \ref{ass:usual}-\ref{ass:qexistsM2}, the parameter $\E[Y^{(A=a)}]$ can be bounded as follows.
\begin{align*}
&\E\big[\min_z\E[Y\mid z,X,A=a]\big| A=1-a\big]p(A=1-a)+\E[Y\mid A=a]p(A=a)\\
&\le\E[Y^{(A=a)}]\le\\
&\E\big[\max_z\E[Y\mid z,X,A=a]\big| A=1-a\big]p(A=1-a)+\E[Y\mid A=a]p(A=a).
\end{align*}
\end{corollary}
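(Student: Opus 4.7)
The plan is to derive Corollary \ref{cor:ETTtoATEZ} as a direct consequence of Theorem \ref{thm:ETTsingleZ} via the law of total expectation combined with the consistency assumption. Specifically, I would first decompose the marginal potential outcome mean according to the two strata of the treatment variable:
\begin{align*}
\E[Y^{(A=a)}] = \E[Y^{(A=a)}\mid A=a]\,p(A=a) + \E[Y^{(A=a)}\mid A=1-a]\,p(A=1-a).
\end{align*}

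Next I would simplify the first summand using consistency (Part 3 of Assumption \ref{ass:usual}): since $Y^{(A=a)}=Y$ whenever $A=a$, we have $\E[Y^{(A=a)}\mid A=a]=\E[Y\mid A=a]$. This is the term that appears unchanged (and is fully identified from the observed distribution) in both the upper and lower bounds of the corollary, multiplied by $p(A=a)$. The second summand, $\E[Y^{(A=a)}\mid A=1-a]\,p(A=1-a)$, is the counterfactual conditional mean in the ``wrong'' stratum, and is exactly the quantity for which Theorem \ref{thm:ETTsingleZ} provides upper and lower bounds.

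Finally, I would substitute the bounds from Theorem \ref{thm:ETTsingleZ} into the decomposition. Since $p(A=1-a)\geq 0$, the inequalities are preserved when multiplied by this probability, and adding the identified quantity $\E[Y\mid A=a]\,p(A=a)$ to both sides yields the stated bounds on $\E[Y^{(A=a)}]$. There is no real obstacle here; the only thing to be careful about is invoking consistency cleanly to handle the stratum $A=a$, and noting that the bounds from Theorem \ref{thm:ETTsingleZ} hold under exactly the same set of assumptions (Assumptions \ref{ass:usual}--\ref{ass:qexistsM2}) that are assumed in the corollary, so nothing extra needs to be verified.
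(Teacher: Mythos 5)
Your proposal is correct and matches the paper's approach: the paper proves the analogous Corollary \ref{cor:ETTtoATE} by exactly this decomposition $\E[Y^{(a)}]=\E[Y^{(a)}\mid A=1-a]p(A=1-a)+\E[Y^{(a)}\mid A=a]p(A=a)$, consistency for the $A=a$ stratum, and substitution of the theorem's bounds (with Corollary \ref{cor:ETTtoATEZ} requiring no further algebraic simplification since its statement is already in that form). Nothing is missing.
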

Several remarks are in order regarding the proposed bounds.
\begin{remark}
Note that in Theorem \ref{thm:ETTsingleZ}, the lower bound will not be smaller than $\inf\mathcal{Y}$ and hence unlike Theorem \ref{thm:ETTsingleW}, we do not need to consider $\inf\mathcal{Y}$. Similarly, the upper bound will not be larger than $\sup\mathcal{Y}$.
\end{remark}

\begin{remark}
Note that if there are no unobserved confounders in the system, $Y$ and $Z$ are independent conditional on $A$ and $X$, providing a clear connection of the bounds in Theorem \ref{thm:ETTsingleZ} and Corollary \ref{cor:ETTtoATEZ} to the IPW and g-formula identification methods in the conditional exchangeability framework given $X$ only.
\end{remark}

\begin{remark}
If we are only interested in partially identifying ETT, then we only need bounds for the parameter $\E[Y^{(A=0)}\mid A=1]$. In this case, Assumption \ref{ass:qexistsM2} can be weakened to requiring the existence of a bridge function $q$ such that almost surely
\[
\E[q(Z,X)\mid A=0,X,U]=\frac{p(U\mid A=1,X)}{p(U\mid A=0,X)}.
\]
\end{remark}

\begin{remark}
Similar to the case of outcome confounding proxies,
in case that we have access to several proxy variables $\{Z_1,...,Z_K\}$ which satisfy Assumptions \ref{ass:indepM2} and \ref{ass:qexistsM2}, a method for improving the bounds would be to construct bounds based on each proxy variable and then choose the upper bound to be the minimum of the upper bounds and choose the lower bound to be the maximum of the lower bounds.	
\end{remark}

\section{Partial Identification Using Two Independent Invalid Proxies}
\label{sec:doubleATE}
In this section, we show that having access to two conditionally independent but invalid proxy variables of the unobserved confounder, one can relax the requirements of conditional independence of proxy variables from the treatment and the outcome variables while still providing valid bounds for the treatment effect. The proxy variables in our setting do not necessarily satisfy the strong exclusion restrictions of the original proximal causal inference. Therefore, the point identification approach in the original framework cannot be used here. We require the existence of two proxy variables $W$ and $Z$ that satisfy the following.

\begin{figure}[t!]
\centering
		\tikzstyle{block} = [draw, circle, inner sep=2.5pt, fill=lightgray]
		\tikzstyle{input} = [coordinate]
		\tikzstyle{output} = [coordinate]
        \begin{tikzpicture}
            \tikzset{edge/.style = {->,> = latex'}}
            \node[] (a) at  (-2,0) {$A$};
            \node[block] (u) at  (0,2.5) {$U$};
            \node[] (y) at  (2,0) {$Y$};
            \node[] (x) at  (0,1.5) {$X$};
            \node[] (w) at  (2,1.5) {$W$};
            \node[] (z) at  (-2,1.5) {$Z$};              
            \draw[-stealth] (u) to (a);
			\draw[-stealth] (u) to (y);
			\draw[-stealth][edge] (a) to (y);
			\draw[-stealth][edge] (x) to (a);			
            \draw[-stealth] (u) to (x);
            \draw[-stealth][edge] (x) to (y);
            \draw[-stealth][edge] (x) to (w);            
            \draw[-stealth] (u) to (w);  
            \draw[-stealth] (w) to (y);
            \draw[-stealth][edge] (x) to (z);            
            \draw[-stealth] (u) to (z);  
            \draw[-stealth] (z) to (a);
			\draw[-stealth,red] (a) to (w);            
            \draw[-stealth,red] (z) to (y);
        \end{tikzpicture}
                \caption{Example of a graphical model consistent with the two proxy setting.}
        \label{fig:ATEdouble3}
\end{figure}
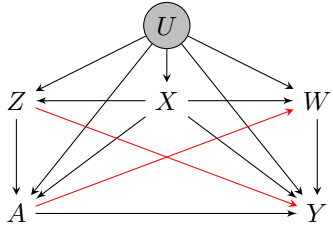

\begin{assumption}
\label{ass:M5}
~\\
\vspace{-7mm}
\begin{itemize}
\item The proxy variables are independent conditional on the treatment variable and all confounders, i.e., $W\independent Z\mid \{A,X,U\}$.
\item The invalid proxy variables satisfy Assumptions \ref{ass:hexistsM1} and \ref{ass:qexistsM2}. 	
\end{itemize}
\end{assumption}
Figure \ref{fig:ATEdouble3} demonstrates an example of a graphical model consistent with our two-proxy setting. Note that neither proxy variable $W$ technically needs to be an outcome confounding proxy variable (defined in Assumption \ref{ass:indepM1}) nor proxy variable $Z$ needs to be a treatment confounding proxy variable (defined in Assumption \ref{ass:indepM2}). Specifically, there can exist a direct causal link between $A$ and $W$ and one between $Z$ and $Y$ (shown in the figure by the red edges).

We have the following partial identification result for the conditional potential outcome mean.

\begin{theorem}
\label{thm:ETTdouble}
Under Assumptions \ref{ass:usual} and \ref{ass:M5}, the parameter $\E[Y^{(A=a)}\mid A=1-a]$ can be bounded as follows.
\begin{align*}
&\max\Big\{\inf\mathcal{Y}~,~\E\Big[\min_{w,z}\frac{p(w,z\mid a,X)}{p(w\mid a,X)p(z\mid a,X)}\E[Y\mid A=a,X]\Big|A=1-a\Big]\Big\}\\
&\le\E[Y^{(A=a)}\mid A=1-a]\le\\
&\min\Big\{\sup\mathcal{Y}~,~\E\Big[\max_{w,z}\frac{p(w,z\mid a,X)}{p(w\mid a,X)p(z\mid a,X)}\E[Y\mid A=a,X]\Big|A=1-a\Big]\Big\}.
\end{align*}
\end{theorem}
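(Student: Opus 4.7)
The plan is to reduce $\E[Y^{(A=a)} \mid A=1-a]$ to an expectation of an observable nuisance ratio multiplied by a factor that, because the two bridge functions are non-negative, can be sandwiched by its extremes over $w,z$.

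First, by the conditional exchangeability and consistency parts of Assumption~\ref{ass:usual}, I would write
$\E[Y^{(A=a)} \mid A=1-a] = \E\big[\E[\E[Y \mid A=a, X, U] \mid X, A=1-a] \mid A=1-a\big]$.
Assumption~\ref{ass:hexistsM1} then lets me replace $\E[Y \mid A=a, X, U]$ by $\E[h(W, a, X) \mid A=a, X, U]$. The next step converts the inner integration from $p(U \mid X, A=1-a)$ to $p(U \mid X, A=a)$ using the density-ratio identity
$\frac{p(u \mid X, A=1-a)}{p(u \mid X, A=a)} = \E[q(Z, a, X) \mid A=a, X, u]$
supplied by Assumption~\ref{ass:qexistsM2}. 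After this change of measure the target becomes
$\E\big[\E[q(Z,a,X) \mid A=a, X, U]\cdot \E[h(W, a, X) \mid A=a, X, U] \,\big|\, A=a, X\big]$,
followed by an outer expectation over $X \mid A=1-a$.

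The crucial algebraic move is to collapse the product of the two inner conditional expectations into a single one using $W \independent Z \mid \{A, X, U\}$:
$\E[q(Z,a,X) \mid A=a, X, U]\cdot \E[h(W,a,X) \mid A=a, X, U] = \E[q(Z,a,X) h(W,a,X) \mid A=a, X, U]$.
Integrating out $U$ then yields $\E[h(W,a,X) q(Z,a,X) \mid A=a, X]$, an expectation against an observable conditional distribution but still involving the unknown non-negative functions $h$ and $q$.

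To bound this quantity without identifying $h$ or $q$, I would write
\[
\E[h(W,a,X) q(Z,a,X) \mid A=a, X] = \sum_{w,z} \frac{p(w,z \mid a, X)}{p(w \mid a, X)\,p(z \mid a, X)}\, h(w,a,X)\, q(z,a,X)\, p(w \mid a, X)\, p(z \mid a, X),
\]
and exploit $h,q \ge 0$ to factor the dependence ratio by its extremes over $w,z$. The resulting outer factor is $\E[h(W,a,X) \mid A=a, X]\cdot \E[q(Z,a,X) \mid A=a, X]$, which simplifies by two easy identities: $\E[h(W,a,X) \mid A=a, X] = \E[Y \mid A=a, X]$ by iterated expectations on Assumption~\ref{ass:hexistsM1}, and $\E[q(Z,a,X) \mid A=a, X] = \sum_u p(u \mid A=1-a, X) = 1$ by integrating the density ratio against $p(u \mid A=a, X)$. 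Taking the outer expectation over $X \mid A=1-a$ and combining with the trivial range bounds $\inf\mathcal{Y} \le \E[Y^{(A=a)} \mid A=1-a] \le \sup\mathcal{Y}$ gives exactly the inequalities claimed. The main obstacle is the merging step enabled by $W \independent Z \mid \{A, X, U\}$: without that independence the product of two bridge-function expectations cannot be rewritten as a single expectation of a product, so the observable nuisance $p(w,z\mid a, X)/(p(w\mid a, X)\,p(z\mid a, X))$ never emerges and the argument breaks down. Everything else---iterated expectations, the density-ratio change of measure, and the elementary inequality $\min r \cdot \E g \le \E(rg) \le \max r \cdot \E g$ for non-negative $g$---is bookkeeping.
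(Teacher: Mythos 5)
Your proposal is correct and follows essentially the same route as the paper's proof: consistency/exchangeability, substitution of $h$ via Assumption \ref{ass:hexistsM1}, the change of measure on $U$ via the $q$-bridge of Assumption \ref{ass:qexistsM2}, the merge of the two conditional expectations into $\E[h\,q\mid A=a,X,U]$ using $W\independent Z\mid\{A,X,U\}$, and finally extracting the extremes of the dependence ratio using $h,q\ge 0$ together with the normalizations $\E[h\mid A=a,X]=\E[Y\mid A=a,X]$ and $\E[q\mid A=a,X]=1$. The only difference is notational (expectations versus explicit sums), so nothing further is needed.
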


Note that in the absence of the unobserved confounder $U$, $\min_{w,z}p(w,z\mid a,x)=\max_{w,z}p(w,z\mid a,x)=p(w\mid a,x)p(z\mid a,x)$. Hence, again the upper and lower bounds match and will be equal to the g-formula and IPW formula in the conditional exchangeability framework given $X$.

\section{Partial Identification in the Presence of Unobserved Mediators}
\label{sec:HidMed}

In this Section we show that our partial identification ideas can also be used for dealing with unobserved mediators in the system. The results in this section are the counterparts of the results of \cite{ghassami2021hidmed} where the authors obtained point identification results under completeness assumptions. We focus on two settings with hidden mediators: Mediation analysis and front-door model. We only show the extension of the method presented in Subsection \ref{sec:singleATE:outcome} and do not repeat  extensions for the methods of Sections \ref{sec:singleATE:treatment} and \ref{sec:doubleATE}, although they are easily deduced from our exposition.

Let $M$ be the mediator variable in the system which is unobserved. We denote the potential outcome variable of $Y$, had the treatment and mediator variables been set to value $A=a$ and $M=m$ (possibly contrary to the fact) by $Y^{(a,m)}$. 
Similarly, we define $M^{(a)}$ as the potential outcome variable of $M$ had the treatment variables been set to value $A=a$.
Based on variables $Y^{(a,m)}$ and $M^{(a)}$, we define $Y^{(a)}=Y^{(a,M^{(a)})}$, and $Y^{(m)}=Y^{(A,m)}$.
We posit the following standard assumptions on the model.

\begin{assumption}
\label{assumption:usualmed}
~\\
\vspace{-7mm}
\begin{itemize}
\item For all $m$, $a$, and $x$, we have $p(m\mid a,x)>0$, and  $p(a\mid x)>0$.
\item $M^{(a)}=M$ if $A=a$. $Y^{(a,m)}=Y$ if $A=a$ and $M=m$.
\end{itemize}
\end{assumption}

\subsection{Partial Identification of NIE and NDE}

We first focus on mediation analysis. The goal is to identify the direct and indirect parts of the ATE, i.e., the part mediated through the mediator variable and the rest of the ATE. This can be done noting the following presentation of the ATE.
\begin{align*}
\E[Y^{(1)}-Y^{(0)}]
&=\E[Y^{(1,M^{(1)})}-Y^{(0,M^{(0)})}]\\
&=\E[Y^{(1,M^{(1)})}-Y^{(1,M^{(0)})}]
+\E[Y^{(1,M^{(0)})}-Y^{(0,M^{(0)})}].
\end{align*}
The first and the second terms in the last expression are called the total indirect effect and the pure direct effect, respectively by \cite{robins1992identifiability}, and are called the natural indirect effect (NIE) and the natural direct effect (NDE) of the treatment on the outcome, respectively by \cite{pearl2001direct}. We follow the latter terminology. NIE can be interpreted as the potential outcome mean if the treatment is fixed at $A=1$ from the point of view of the outcome variable but changes from $A=1$ to $A=0$ from the point of view of the mediator variable. Similarly, NDE can be interpreted as the potential outcome mean if the treatment is changed via intervention from $A=1$ to $A=0$ from the point of view of the outcome variable but is fixed at $A=0$ from the point of view of the mediator variable. We investigate identification of NIE and NDE in case the underlying mediator is hidden, however under the assumption that the model does not contain any hidden confounder. This can be formalized as follows \citep{imai2010general}. 

\begin{assumption}
\label{assumption:medclass}
For any two values of the treatment $a$ and $a'$, and value of the mediator $m$, we have
$(i)$ $Y^{(a,m)}\independent A\mid X$,
$(ii)$ $M^{(a)}\independent A\mid X$,
and $(iii)$ $Y^{(a,m)}\independent M^{(a')}\mid X$.
\end{assumption}
Due to Assumption \ref{assumption:medclass}, the parameters $\E[Y^{(1,M^{(1)})}]$ and $\E[Y^{(0,M^{(0)})}]$ are point identified. Hence, we focus on the parameter $\E[Y^{(1,M^{(0)})}]$. The identification formula for the this functional, known as the mediation formula in the literature \citep{pearl2001direct,imai2010general} requires observing the mediator variable. Here we show that in the setting with unobserved mediator, this parameter can be partially identified provided that we have access to a proxy variable $W$ of the hidden mediator. We have the following requirements on the proxy variable.

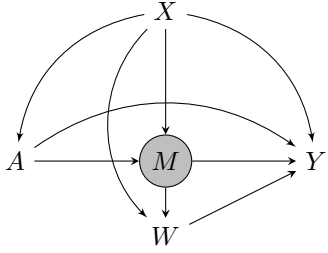
\begin{figure}[t!]
\centering
		\tikzstyle{block} = [draw, circle, inner sep=2.5pt, fill=lightgray]
		\tikzstyle{input} = [coordinate]
		\tikzstyle{output} = [coordinate]
        \begin{tikzpicture}
            \tikzset{edge/.style = {->,> = latex'}}
            \node[] (a) at  (-2,0) {$A$};
            \node[block] (m) at  (0,0) {$M$};
            \node[] (y) at  (2,0) {$Y$};
            \node[] (x) at  (0,2) {$X$};
            \node[] (w) at  (0,-1) {$W$}; 
            \draw[-stealth] (a) to (m);
			\draw[-stealth] (m) to (y);
			\draw[-stealth][edge, bend left=35] (a) to (y);
			\draw[-stealth][edge, bend left=-35] (x) to (a);
            \draw[-stealth] (x) to (m);
            \draw[-stealth][edge, bend left=35] (x) to (y);
            \draw[-stealth][edge, bend left=-45] (x) to (w);            
            \draw[-stealth] (m) to (w);  
            \draw[-stealth] (w) to (y);                                                          
        \end{tikzpicture}
                \caption{Example of a graphical model that satisfies Assumptions \ref{assumption:medclass} and \ref{ass:indepMNIE}.}
        \label{fig:mediation}        
\end{figure}

\begin{assumption}
\label{ass:indepMNIE}
The proxy variable is independent of the treatment variable conditional on the mediator variable and the observed confounder variable, i.e., $W\independent A\mid \{X,M\}$.
\end{assumption}
Figure \ref{fig:mediation} demonstrates an example of a graphical model that satisfies Assumptions \ref{assumption:medclass} and \ref{ass:indepMNIE}.

\begin{assumption}
\label{ass:hexistsMNIE}
There exists a non-negative bridge function $h$ such that almost surely
\[
\E[Y\mid A=1,X,M]=\E[h(W,X)\mid X,M].
\]
\end{assumption}

We have the following partial identification result for the parameter $\E[Y^{(1,M^{(0)})}]$.

\begin{theorem}
\label{thm:Mediation}	
Under Assumptions \ref{assumption:usualmed}-\ref{ass:hexistsMNIE}, the parameter $\E[Y^{(1,M^{(0)})}]$ can be bounded as follows.
\begin{align*}
&\max\Big\{\inf\mathcal{Y}~,~\E\Big[\min_w\frac{p(w\mid A=0,X)}{p(w\mid A=1,X)}\E[Y\mid A=1,X]\Big]\Big\}\\
&\le\E[Y^{(1,M^{(0)})}]\le\\
&\min\Big\{\sup\mathcal{Y}~,~\E\Big[\max_w\frac{p(w\mid A=0,X)}{p(w\mid A=1,X)}\E[Y\mid A=1,X]\Big]\Big\}.
\end{align*}
\end{theorem}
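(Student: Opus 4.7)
The plan is to mimic the proof of Theorem \ref{thm:ETTsingleW}, but with the unobserved mediator $M$ playing the role that $U$ played there, while exploiting the sequential ignorability conditions in Assumption \ref{assumption:medclass} to first reduce the target to the Pearl mediation formula. Concretely, under Assumption \ref{assumption:medclass}, parts (i)--(iii) together with consistency and positivity (Assumption \ref{assumption:usualmed}) give the standard identification
\begin{equation*}
\E[Y^{(1,M^{(0)})}]=\E_X\Big[\sum_m \E[Y\mid A=1,X,M=m]\,p(M=m\mid A=0,X)\Big],
\end{equation*}
which I would derive by first conditioning on $X$, applying (iii) with $a=1,a'=0$ to factor the joint of $Y^{(1,m)}$ and $M^{(0)}$, then using (i) and (ii) together with consistency to replace the counterfactual expectation and the counterfactual mediator law by their observed-data analogues.

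Next I would substitute the bridge function. By Assumption \ref{ass:hexistsMNIE}, almost surely
\begin{equation*}
\E[Y\mid A=1,X,M=m]=\sum_w h(w,X)\,p(w\mid X,M=m),
\end{equation*}
and Assumption \ref{ass:indepMNIE} ($W\independent A\mid\{X,M\}$) lets me rewrite $p(w\mid X,M=m)=p(w\mid A=0,X,M=m)$. Marginalizing $m$ against $p(M=m\mid A=0,X)$ collapses the inner sum to $p(w\mid A=0,X)$, giving the key representation
\begin{equation*}
\E[Y^{(1,M^{(0)})}]=\E_X\Big[\sum_w h(w,X)\,p(w\mid A=0,X)\Big].
\end{equation*}
The same bridge relation, but now marginalized against $p(M=m\mid A=1,X)$, yields the observable identity $\E[Y\mid A=1,X]=\sum_w h(w,X)\,p(w\mid A=1,X)$, which is what lets us anchor the unknown quantity to data.

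Now comes the bounding step, which uses the non-negativity of $h$ in exactly the same way as in Theorem \ref{thm:ETTsingleW}. Writing the target as an importance-weighted version of the anchor,
\begin{equation*}
\sum_w h(w,X)p(w\mid A=0,X)=\sum_w h(w,X)\,\tfrac{p(w\mid A=0,X)}{p(w\mid A=1,X)}\,p(w\mid A=1,X),
\end{equation*}
and pulling out the pointwise minimum or maximum of the ratio (valid because $h\ge 0$), I would obtain, for almost every $X$,
\begin{equation*}
\min_w\tfrac{p(w\mid A=0,X)}{p(w\mid A=1,X)}\E[Y\mid A=1,X]\le \sum_w h(w,X)p(w\mid A=0,X)\le \max_w\tfrac{p(w\mid A=0,X)}{p(w\mid A=1,X)}\E[Y\mid A=1,X].
\end{equation*}
Taking expectations over $X$ gives the non-trivial bounds in the theorem, and the $\inf\mathcal{Y}$ / $\sup\mathcal{Y}$ envelopes follow trivially from $Y^{(1,M^{(0)})}\in\mathcal{Y}\subseteq[0,\infty)$.

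The only genuinely delicate step is the first one: rigorously reducing $\E[Y^{(1,M^{(0)})}]$ to the mediation formula requires using the cross-world independence (iii) in Assumption \ref{assumption:medclass}, and one has to be careful to treat $M^{(1)}$ correctly when going from $\E[Y^{(1,m)}\mid A=1,M=m,X]$ to $\E[Y^{(1,m)}\mid X]$. Everything after that is algebraic manipulation together with the same non-negative-bridge trick already used in Theorem \ref{thm:ETTsingleW}, so the main novelty of the argument is merely verifying that the hidden-mediator analogue of the outcome-confounding bridge identity still collapses to an observable-data functional when combined with $W\independent A\mid\{X,M\}$.
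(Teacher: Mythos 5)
Your proposal is correct and follows essentially the same route as the paper's proof: reduce $\E[Y^{(1,M^{(0)})}]$ to the mediation formula, substitute the bridge function via Assumption \ref{ass:hexistsMNIE}, swap the conditioning arm of $W$ using Assumption \ref{ass:indepMNIE} to collapse to $\E[h(W,X)\mid A=0,X]$, reweight by the density ratio, and bound using non-negativity of $h$ together with the identity $\E[Y\mid A=1,X]=\E[h(W,X)\mid A=1,X]$. The only cosmetic difference is that you spell out the derivation of the mediation formula from Assumption \ref{assumption:medclass}, which the paper takes as given.
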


\subsection{Partial Identification of ATE in the Front-Door Model}

Front-door model is one of the earliest models introduced in the literature of causal inference in which the causal effect of a treatment variable on an outcome variable is identified despite allowing for the treatment-outcome relation to have an unobserved confounder and without any assumptions on the form of the structural equations of the variables \citep{pearl2009causality}. The main assumption of this framework is that the causal effect of the treatment variable on the outcome variable is fully relayed by a mediator variable, and neither the treatment-mediator relation nor the mediator-outcome relation have an unobserved confounder. This assumption is formally stated as follows.\\

\begin{assumption}
\label{assumption:fd}
~\\
\vspace{-7mm}
\begin{itemize}
\item For any value $a$ of the treatment and value $m$ of the mediator, we have
$(i)$ $M^{(a)}\independent A\mid X$, and 
$(ii)$ $Y^{(m)}\independent M\mid A,X$.
\item For any value $a$ of the treatment and value $m$ of the mediator, we have $Y^{(a,m)}= Y^{(m)}$.
\end{itemize}
\end{assumption}
The identification formula for the front-door model requires observations of the mediator variable. Here we show that in the setting with unobserved mediator, the causal effect of the treatment variable on the outcome variable can be partially identified provided that we have access to a proxy variable $W$ of the hidden mediator which satisfies Assumptions \ref{ass:indepMNIE} and the following assumption pertaining to the existence of a bridge function. Figure \ref{fig:FD} demonstrates an example of a graphical model that satisfies Assumptions \ref{ass:indepMNIE} and \ref{assumption:fd}.

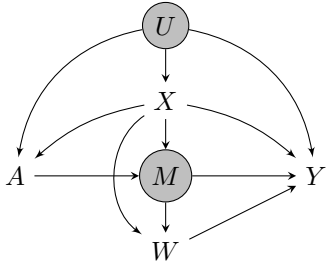
\begin{figure}[t!]
\centering
		\tikzstyle{block} = [draw, circle, inner sep=2.5pt, fill=lightgray]
		\tikzstyle{input} = [coordinate]
		\tikzstyle{output} = [coordinate]
        \begin{tikzpicture}
            \tikzset{edge/.style = {->,> = latex'}}
            \node[] (a) at  (-2,0) {$A$};
            \node[block] (m) at  (0,0) {$M$};
            \node[] (y) at  (2,0) {$Y$};
            \node[] (x) at  (0,1) {$X$};
            \node[block] (u) at  (0,2) {$U$};            
            \node[] (w) at  (0,-1) {$W$}; 
            \draw[-stealth] (a) to (m);
			\draw[-stealth] (m) to (y);
			\draw[-stealth] (u) to (x);			
			\draw[-stealth][edge, bend left=-15] (x) to (a);
			\draw[-stealth][edge, bend left=-35] (u) to (a);			
            \draw[-stealth] (x) to (m);
            \draw[-stealth][edge, bend left=15] (x) to (y);
            \draw[-stealth][edge, bend left=35] (u) to (y);            
            \draw[-stealth][edge, bend left=-55] (x) to (w);            
            \draw[-stealth] (m) to (w);  
            \draw[-stealth] (w) to (y);                                                          
        \end{tikzpicture}
                \caption{Example of a graphical model that satisfies Assumptions \ref{ass:indepMNIE} and \ref{assumption:fd}.}
        \label{fig:FD}        
\end{figure}

\begin{assumption}
\label{ass:hexistsFD}
There exists a non-negative bridge function $h$ such that almost surely
\[
\E[Y\mid A,X,M]=\E[h(W,A,X)\mid A,X,M].
\]	
\end{assumption}

We have the following partial identification result for the potential outcome mean.

\begin{theorem}
\label{thm:FrontDoor}	
Under Assumptions \ref{assumption:usualmed}, \ref{ass:indepMNIE}, \ref{assumption:fd}, and \ref{ass:hexistsFD}, the parameter $\E[Y^{(A=a)}]$ can be bounded as follows.
\begin{align*}
&\E[I(A=a)Y]+\max\Big\{\inf\mathcal{Y}\times p(A=1-a),\E\Big[I(A=1-a)\min_w\frac{p(w\mid a,X)}{p(w\mid 1-a,X)}Y\Big]\Big\}\\
&\le\E[Y^{(A=a)}]\le\\
&\E[I(A=a)Y]+\min\Big\{\sup\mathcal{Y}\times p(A=1-a),\E\Big[I(A=1-a)\max_w\frac{p(w\mid a,X)}{p(w\mid 1-a,X)}Y\Big]\Big\}.
\end{align*}

\end{theorem}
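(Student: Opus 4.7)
The plan is to decompose
\[
\E[Y^{(A=a)}] = \E[I(A=a)\,Y^{(A=a)}] + \E[I(A=1-a)\,Y^{(A=a)}]
\]
and reduce the first term to $\E[I(A=a)\,Y]$ by the consistency statement of Assumption \ref{assumption:usualmed}. This recovers the additive term that appears in both the upper and lower bounds, so all of the remaining work lies in bounding $\E[I(A=1-a)\,Y^{(A=a)}]$. The trivial side of that bound comes for free: since $Y^{(A=a)} \in \mathcal{Y}$, we have $\inf\mathcal{Y}\,p(A=1-a) \le \E[I(A=1-a)\,Y^{(A=a)}] \le \sup\mathcal{Y}\,p(A=1-a)$.

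For the non-trivial, proxy-based side, I would first invoke the standard front-door identification formula implied by Assumption \ref{assumption:fd} together with consistency,
\[
\E[Y^{(A=a)} \mid X=x] = \sum_m p(m\mid a,x) \sum_{a'} p(a'\mid x)\,\E[Y\mid a',x,m],
\]
and split off the $a'=a$ contribution, which (after integrating against $p(x)$) reproduces $\E[I(A=a)\,Y]$. What remains is
\[
\E[I(A=1-a)\,Y^{(A=a)}] = \sum_x p(x)\,p(1-a\mid x) \sum_m p(m\mid a,x)\,\E[Y\mid 1-a,x,m].
\]
I then substitute the bridge representation of Assumption \ref{ass:hexistsFD} for $\E[Y\mid 1-a,x,m]$, and use the proxy independence $W \independent A \mid X, M$ of Assumption \ref{ass:indepMNIE} to swap $p(w\mid 1-a,x,m)$ for $p(w\mid a,x,m)$. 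Marginalizing over $m$ against $p(m\mid a,x)$ collapses the inner double sum to $\sum_w h(w,1-a,x)\,p(w\mid a,x)$.

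The last step is the min/max extraction: writing $p(w\mid a,x) = \bigl[p(w\mid a,x)/p(w\mid 1-a,x)\bigr]\,p(w\mid 1-a,x)$ and using $h \ge 0$, one pulls the pointwise $\max_w$ (respectively $\min_w$) of the density ratio outside the sum. What remains, $\sum_w h(w,1-a,x)\,p(w\mid 1-a,x)$, equals $\E[Y\mid A=1-a, X=x]$ by one further application of the bridge equation (integrating $M$ out at $A=1-a$, $X=x$). Integrating the resulting upper bound over $x$ against $p(x)\,p(1-a\mid x)$ and rewriting via iterated expectations on $(A,X)$ yields $\E\bigl[I(A=1-a)\max_w \tfrac{p(w\mid a,X)}{p(w\mid 1-a,X)}\,Y\bigr]$, with a symmetric derivation for the lower bound. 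Combining (by $\max$ for the lower bound, $\min$ for the upper) with the trivial bounds completes the theorem. The main obstacle is really bookkeeping: one must be careful that the non-negativity of $h$ is actively used when extracting the pointwise ratio, and that Assumption \ref{ass:indepMNIE} is invoked precisely at the step $p(w\mid 1-a,x,m) = p(w\mid a,x,m)$; beyond that, the argument is essentially the front-door analog of the derivation behind Theorem \ref{thm:ETTsingleW} and Corollary \ref{cor:ETTtoATE}.
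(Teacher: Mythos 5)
Your proposal is correct and follows essentially the same route as the paper's proof: front-door identification split into the $a'=a$ term (which reduces to $\E[I(A=a)Y]$) and the $a'=1-a$ term, substitution of the bridge function via Assumption \ref{ass:hexistsFD}, the swap $p(w\mid 1-a,x,m)=p(w\mid a,x,m)$ via Assumption \ref{ass:indepMNIE}, extraction of the pointwise density ratio using $h\ge 0$, and a final application of the bridge equation to recover $\E[Y\mid A=1-a,X]$ before combining with the trivial bounds. No gaps.
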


\section{Estimation and Inference}
\label{sec:est}

So far, we have presented nonparametric partial identification formulae for causal parameters in the presence of unobserved confounders or mediators. We now turn to estimation and inference. Throughout this section, $A$ is binary, $a\in\{0,1\}$ is fixed, and the proxy variables $W$ and $Z$ have fixed finite supports $\mathcal{W}$ and $\mathcal{Z}$. The covariates $X$ may be discrete or continuous. When the support endpoints enter a smooth functional below, $\inf\mathcal{Y}$ and $\sup\mathcal{Y}$ are treated as fixed, known, and finite constants. Let
\[
\mu_a(X)=\E[Y\mid A=a,X],\qquad
\mu_{a,z}(X)=\E[Y\mid A=a,Z=z,X],\qquad
p_a(w\mid X)=p(w\mid A=a,X).
\]
Thus, $p_{1-a}(w\mid X)=p(w\mid A=1-a,X)$. When all variables have finite support, the bounds may be estimated by empirical plug-in estimators. With continuous covariates, the same functionals may be evaluated using regression or machine learning estimates of the nuisance functions, subject to the regularity and rate conditions needed for the inferential procedure.

The bounds in Theorems \ref{thm:ETTsingleW} and \ref{thm:ETTsingleZ} are non-smooth functionals of the observed data law. We therefore use the LogSumExp operator to construct smooth lower and upper bounds. For real numbers $\{x_1,\ldots,x_n\}$ and $\gamma\ne 0$, define
\[
LSE(\{x_1,\ldots,x_n\};\gamma)
=\frac{1}{\gamma}\log\Big\{\sum_{i=1}^n\exp(\gamma x_i)\Big\}.
\]
For $\gamma>0$,
\[
\max_i x_i\le LSE(\{x_1,\ldots,x_n\};\gamma)
\le\max_i x_i+\frac{\log n}{\gamma},
\]
whereas, for $\gamma<0$,
\[
\min_i x_i+\frac{\log n}{\gamma}
\le LSE(\{x_1,\ldots,x_n\};\gamma)\le\min_i x_i.
\]
Consequently, the LSE operator approaches the maximum as $\gamma\to\infty$ and the minimum as $\gamma\to-\infty$. 
Applications of this approximation have appeared in optimization \citep{boyd2004convex}, machine learning \citep{murphy2012machine,goodfellow2016deep,calafiore2019log,calafiore2020universal}, and statistics \citep{wainwright2019high,chernozhukov2012central,tchetgen2017general,levis2023covariate}. Notably, \cite{tchetgen2017general} and \cite{levis2023covariate} employed this approximation technique to facilitate statistical inference pertaining to bounds within the instrumental variables model for missing data and causal effects, respectively.

The first result gives smooth bounds for the conditional potential outcome mean that enters the ETT.

\begin{corollary}
\label{cor:approx}
~
\vspace{-3mm}
\begin{itemize}
\item[\bf(a)] Let
\[
v_{w,X}=\frac{p_{1-a}(w\mid X)}{p_a(w\mid X)},\qquad
r_W(X;\gamma)=LSE(\{v_{w,X}\}_{w\in\mathcal{W}};\gamma),
\]
and define
\[
C(\gamma)=\E[r_W(X;\gamma)\mu_a(X)\mid A=1-a].
\]
For $\alpha>0$, set
\begin{align*}
\psi_{LW}(\alpha)
&=LSE(\{\inf\mathcal{Y},C(-\alpha)\};\alpha)-\frac{\log 2}{\alpha},\\
\psi_{UW}(\alpha)
&=LSE(\{\sup\mathcal{Y},C(\alpha)\};-\alpha)+\frac{\log 2}{\alpha}.
\end{align*}
Under the assumptions of Theorem \ref{thm:ETTsingleW},
\[
\psi_{LW}(\alpha)
\le \E[Y^{(A=a)}\mid A=1-a]
\le \psi_{UW}(\alpha).
\]

\item[\bf(b)] Let
\[
r_Z(X;\gamma)
=LSE(\{\mu_{a,z}(X)\}_{z\in\mathcal{Z}};\gamma),\qquad
\psi_Z(\gamma)=\E[r_Z(X;\gamma)\mid A=1-a],
\]
and, for $\alpha>0$, set $\psi_{LZ}(\alpha)=\psi_Z(-\alpha)$ and
$\psi_{UZ}(\alpha)=\psi_Z(\alpha)$. Under the assumptions of Theorem
\ref{thm:ETTsingleZ},
\[
\psi_{LZ}(\alpha)
\le \E[Y^{(A=a)}\mid A=1-a]
\le \psi_{UZ}(\alpha).
\]
\end{itemize}
\end{corollary}

Because the LSE operator is not invariant to multiplication by a positive scalar, smoothing $v_{w,X}$ before multiplication by $\mu_a(X)$ and smoothing the product directly yield different valid smooth bounds. The latter construction is recorded next.

\begin{corollary}
\label{cor:approxalt}
Let
\[
u_{w,X}=v_{w,X}\mu_a(X),\qquad
t_W(X;\gamma)=LSE(\{u_{w,X}\}_{w\in\mathcal{W}};\gamma),\qquad
D(\gamma)=\E[t_W(X;\gamma)\mid A=1-a].
\]
For $\alpha>0$, define
\begin{align*}
\widetilde{\psi}_{LW}(\alpha)
&=LSE(\{\inf\mathcal{Y},D(-\alpha)\};\alpha)-\frac{\log 2}{\alpha},\\
\widetilde{\psi}_{UW}(\alpha)
&=LSE(\{\sup\mathcal{Y},D(\alpha)\};-\alpha)+\frac{\log 2}{\alpha}.
\end{align*}
Under the assumptions of Theorem \ref{thm:ETTsingleW},
\[
\widetilde{\psi}_{LW}(\alpha)
\le \E[Y^{(A=a)}\mid A=1-a]
\le \widetilde{\psi}_{UW}(\alpha).
\]
\end{corollary}

The third result gives smooth bounds for the marginal potential outcome mean. These bounds may be combined across $a=0$ and $a=1$ to obtain smooth ATE bounds. To avoid a defect of the ordinary LSE soft minimum when it is used in a denominator, we use a positive soft extremum for the marginal $W$-based bounds.

\begin{corollary}
\label{cor:approxmarginal}
~
\vspace{-3mm}
\begin{itemize}
\item[\bf(a)] Let
\[
q_w(X)=p(A=a\mid W=w,X)
\]
and, for $\gamma\ne0$, define
\[
s_W(X;\gamma)
=\exp\!\left[LSE(\{\log q_w(X)\}_{w\in\mathcal{W}};\gamma)\right]
=\left\{\sum_{w\in\mathcal{W}}q_w(X)^\gamma\right\}^{1/\gamma}.
\]
Also let
\begin{align*}
C_1&=\inf\mathcal{Y}\,p(A=1-a)+\E[I(A=a)Y],\\
C_2&=\sup\mathcal{Y}\,p(A=1-a)+\E[I(A=a)Y],\\
C_3(\gamma)&=\E\left[\frac{I(A=a)Y}{s_W(X;\gamma)}\right],
\end{align*}
and define
\begin{align*}
\varphi_{LW}(\alpha)
&=LSE(\{C_1,C_3(\alpha)\};\alpha)-\frac{\log 2}{\alpha},\\
\varphi_{UW}(\alpha)
&=LSE(\{C_2,C_3(-\alpha)\};-\alpha)+\frac{\log 2}{\alpha}.
\end{align*}
Suppose $p(W=w\mid X)>0$ and $q_w(X)>0$ almost surely for every $w\in\mathcal{W}$. Under the assumptions of Corollary \ref{cor:ETTtoATE}, for every $\alpha>0$,
\[
\varphi_{LW}(\alpha)
\le \E[Y^{(A=a)}]
\le \varphi_{UW}(\alpha).
\]

\item[\bf(b)] Let $r_Z(X;\gamma)$ be as in Corollary \ref{cor:approx}, and define
\begin{align*}
\varphi_{LZ}(\alpha)
&=\E[I(A=1-a)r_Z(X;-\alpha)+I(A=a)Y],\\
\varphi_{UZ}(\alpha)
&=\E[I(A=1-a)r_Z(X;\alpha)+I(A=a)Y].
\end{align*}
Under the assumptions of Corollary \ref{cor:ETTtoATEZ}, for every $\alpha>0$,
\[
\varphi_{LZ}(\alpha)
\le \E[Y^{(A=a)}]
\le \varphi_{UZ}(\alpha).
\]
\end{itemize}
\end{corollary}

For regular plug-in nuisance estimators, the smoothness of the functionals in Corollaries \ref{cor:approx}--\ref{cor:approxmarginal} permits ordinary bootstrap inference under the usual regularity conditions. For brevity, consider the four endpoints in Corollary \ref{cor:approx}. Let $\widehat\psi_n(\alpha)$ denote a plug-in estimator, let $\widehat\psi_{n,1}^*(\alpha),\ldots,\widehat\psi_{n,B}^*(\alpha)$ denote its bootstrap replications, and let $\widehat\psi_{\beta}^*(\alpha)$ denote their empirical $\beta$-quantile. Basic-bootstrap $95\%$ interval estimators for the conditional potential outcome mean are
\[
C_{W_n}(\alpha)=\left(
2\widehat\psi_{LW,n}(\alpha)-\widehat\psi_{LW,1-0.05/2}^*(\alpha),
2\widehat\psi_{UW,n}(\alpha)-\widehat\psi_{UW,0.05/2}^*(\alpha)
\right)
\]
and
\[
C_{Z_n}(\alpha)=\left(
2\widehat\psi_{LZ,n}(\alpha)-\widehat\psi_{LZ,1-0.05/2}^*(\alpha),
2\widehat\psi_{UZ,n}(\alpha)-\widehat\psi_{UZ,0.05/2}^*(\alpha)
\right).
\]
Analogous formulas apply to Corollaries \ref{cor:approxalt} and \ref{cor:approxmarginal}.

\subsection{Efficient Influence Functions for the Smooth Bounds}

We next derive efficient influence functions (EIFs) for the smooth bound functionals under the nonparametric model for the observed data law \citep{newey1990semiparametric}, where $O=(Y,A,X,W,Z)$ denotes the observed data vector (with only the relevant proxy needed for each result). The following conditions are maintained for these results: $\alpha$ is fixed; $\mathcal{W}$ and $\mathcal{Z}$ are fixed finite supports; the fixed support endpoints used above are finite; $Y$ is square-integrable; and the treatment and proxy probabilities appearing in denominators are uniformly bounded away from zero on their relevant supports. In particular, this condition applies to $p(a\mid X)$, $p(1-a\mid X)$, $p_a(w\mid X)$, $p(w\mid X)$, $p(A=a,Z=z\mid X)$, and $q_w(X)$. These conditions are convenient sufficient conditions; more generally, it is enough that the displayed gradients are square-integrable under the corresponding inverse-weight moment conditions.

For the main conditional bounds, define
\[
\omega_w^W(X;\gamma)
=\frac{\exp\{\gamma v_{w,X}\}}
{\sum_{\widetilde w\in\mathcal{W}}\exp\{\gamma v_{\widetilde w,X}\}}
\]
and
\[
\omega_z^Z(X;\gamma)
=\frac{\exp\{\gamma\mu_{a,z}(X)\}}
{\sum_{\widetilde z\in\mathcal{Z}}\exp\{\gamma\mu_{a,\widetilde z}(X)\}}.
\]

\begin{theorem}
\label{thm:eifapprox}
Let
\begin{align*}
L_1(\alpha)
&=\frac{\exp\{\alpha C(-\alpha)\}}
{\exp\{\alpha\inf\mathcal{Y}\}+\exp\{\alpha C(-\alpha)\}},\\
L_2(\alpha)
&=\frac{\exp\{-\alpha C(\alpha)\}}
{\exp\{-\alpha\sup\mathcal{Y}\}+\exp\{-\alpha C(\alpha)\}}.
\end{align*}
Define
\begin{align*}
\eta_{C,\gamma}(O)
={}&\frac{I(A=1-a)}{p(A=1-a)}
\{r_W(X;\gamma)\mu_a(X)-C(\gamma)\}\\
&+\frac{I(A=a)}{p(a\mid X)}
\frac{p(1-a\mid X)}{p(A=1-a)}
r_W(X;\gamma)\{Y-\mu_a(X)\}\\
&+\sum_{w\in\mathcal{W}}
\frac{I(A=1-a)}{p(A=1-a)}
\omega_w^W(X;\gamma)\frac{\mu_a(X)}{p_a(w\mid X)}
\{I(W=w)-p_{1-a}(w\mid X)\}\\
&-\sum_{w\in\mathcal{W}}
\frac{I(A=a)}{p(a\mid X)}
\frac{p(1-a\mid X)}{p(A=1-a)}
\omega_w^W(X;\gamma)
\frac{p_{1-a}(w\mid X)\mu_a(X)}{p_a(w\mid X)^2}
\{I(W=w)-p_a(w\mid X)\}.
\end{align*}
Then the EIFs of $\psi_{LW}(\alpha)$ and $\psi_{UW}(\alpha)$ are, respectively,
\[
L_1(\alpha)\eta_{C,-\alpha}(O)
\qquad\text{and}\qquad
L_2(\alpha)\eta_{C,\alpha}(O).
\]

Define
\begin{align*}
\eta_{\psi_Z,\gamma}(O)
={}&\frac{I(A=1-a)}{p(A=1-a)}
\{r_Z(X;\gamma)-\psi_Z(\gamma)\}\\
&+\sum_{z\in\mathcal{Z}}
\frac{I(A=a,Z=z)}{p(A=a,Z=z\mid X)}
\frac{p(1-a\mid X)}{p(A=1-a)}
\omega_z^Z(X;\gamma)\{Y-\mu_{a,z}(X)\}.
\end{align*}
Then the EIFs of $\psi_{LZ}(\alpha)$ and $\psi_{UZ}(\alpha)$ are, respectively,
\[
\eta_{\psi_Z,-\alpha}(O)
\qquad\text{and}\qquad
\eta_{\psi_Z,\alpha}(O).
\]
\end{theorem}

For the alternative conditional smoothing, let
\[
\widetilde\omega_w^W(X;\gamma)
=\frac{\exp\{\gamma u_{w,X}\}}
{\sum_{\widetilde w\in\mathcal{W}}\exp\{\gamma u_{\widetilde w,X}\}}.
\]

\begin{theorem}
\label{thm:eifapproxalt}
Let
\begin{align*}
\widetilde L_1(\alpha)
&=\frac{\exp\{\alpha D(-\alpha)\}}
{\exp\{\alpha\inf\mathcal{Y}\}+\exp\{\alpha D(-\alpha)\}},\\
\widetilde L_2(\alpha)
&=\frac{\exp\{-\alpha D(\alpha)\}}
{\exp\{-\alpha\sup\mathcal{Y}\}+\exp\{-\alpha D(\alpha)\}}.
\end{align*}
Define
\begin{align*}
\eta_{D,\gamma}(O)
={}&\frac{I(A=1-a)}{p(A=1-a)}
\{t_W(X;\gamma)-D(\gamma)\}\\
&+\sum_{w\in\mathcal{W}}
\frac{I(A=a)}{p(a\mid X)}
\widetilde\omega_w^W(X;\gamma)
\frac{p(1-a\mid X)p_{1-a}(w\mid X)}
{p(A=1-a)p_a(w\mid X)}
\{Y-\mu_a(X)\}\\
&+\sum_{w\in\mathcal{W}}
\frac{I(A=1-a)}{p(A=1-a)}
\widetilde\omega_w^W(X;\gamma)
\frac{\mu_a(X)}{p_a(w\mid X)}
\{I(W=w)-p_{1-a}(w\mid X)\}\\
&-\sum_{w\in\mathcal{W}}
\frac{I(A=a)}{p(a\mid X)}
\widetilde\omega_w^W(X;\gamma)
\frac{p(1-a\mid X)p_{1-a}(w\mid X)\mu_a(X)}
{p(A=1-a)p_a(w\mid X)^2}
\{I(W=w)-p_a(w\mid X)\}.
\end{align*}
Then the EIFs of $\widetilde\psi_{LW}(\alpha)$ and
$\widetilde\psi_{UW}(\alpha)$ are, respectively,
\[
\widetilde L_1(\alpha)\eta_{D,-\alpha}(O)
\qquad\text{and}\qquad
\widetilde L_2(\alpha)\eta_{D,\alpha}(O).
\]
\end{theorem}

For the marginal $W$-based bounds, define
\[
\lambda_w(X;\gamma)
=\frac{q_w(X)^\gamma}
{\sum_{\widetilde w\in\mathcal{W}}q_{\widetilde w}(X)^\gamma},
\qquad
g_w(X;\gamma)
=\frac{\E[I(A=a)Y\mid X]\lambda_w(X;\gamma)}
{s_W(X;\gamma)q_w(X)}.
\]
Also let
\begin{align*}
K_1(\alpha)
&=\frac{\exp(\alpha C_1)}
{\exp(\alpha C_1)+\exp\{\alpha C_3(\alpha)\}},\\
K_2(\alpha)
&=\frac{\exp(-\alpha C_2)}
{\exp(-\alpha C_2)+\exp\{-\alpha C_3(-\alpha)\}},
\end{align*}
and define
\begin{align*}
\eta_1(O)&=\inf\mathcal{Y}\,I(A=1-a)+I(A=a)Y-C_1,\\
\eta_2(O)&=\sup\mathcal{Y}\,I(A=1-a)+I(A=a)Y-C_2,\\
\eta_{3,\gamma}(O)
&=\frac{I(A=a)Y}{s_W(X;\gamma)}-C_3(\gamma)\\
&\quad-\sum_{w\in\mathcal{W}}g_w(X;\gamma)
\frac{I(W=w)}{p(w\mid X)}
\{I(A=a)-q_w(X)\}.
\end{align*}

\begin{theorem}
\label{thm:eifmarginal}
The EIFs of $\varphi_{LW}(\alpha)$ and $\varphi_{UW}(\alpha)$ are, respectively,
\[
K_1(\alpha)\eta_1(O)+\{1-K_1(\alpha)\}\eta_{3,\alpha}(O)
\]
and
\[
K_2(\alpha)\eta_2(O)+\{1-K_2(\alpha)\}\eta_{3,-\alpha}(O).
\]

For the $Z$-based bounds, let
\[
M_Z(\gamma)=\E[I(A=1-a)r_Z(X;\gamma)]
\]
and define
\begin{align*}
\eta_{\varphi_Z,\gamma}(O)
={}&I(A=1-a)r_Z(X;\gamma)-M_Z(\gamma)\\
&+\sum_{z\in\mathcal{Z}}
\frac{I(A=a,Z=z)}{p(A=a,Z=z\mid X)}
p(1-a\mid X)\omega_z^Z(X;\gamma)
\{Y-\mu_{a,z}(X)\}\\
&+I(A=a)Y-\E[I(A=a)Y].
\end{align*}
Then the EIFs of $\varphi_{LZ}(\alpha)$ and $\varphi_{UZ}(\alpha)$ are, respectively,
\[
\eta_{\varphi_Z,-\alpha}(O)
\qquad\text{and}\qquad
\eta_{\varphi_Z,\alpha}(O).
\]
\end{theorem}

The displayed marginal EIFs immediately yield EIFs for the smooth ATE endpoints. For either proxy construction $Q\in\{W,Z\}$, write $\varphi_{LQ,a}(\alpha)$ and $\varphi_{UQ,a}(\alpha)$ for the endpoints evaluated at treatment level $a$, and let $\phi_{LQ,a}(O;\alpha)$ and $\phi_{UQ,a}(O;\alpha)$ denote their EIFs. Then
\[
L_{ATE,Q}(\alpha)=\varphi_{LQ,1}(\alpha)-\varphi_{UQ,0}(\alpha),
\qquad
U_{ATE,Q}(\alpha)=\varphi_{UQ,1}(\alpha)-\varphi_{LQ,0}(\alpha),
\]
have EIFs
\[
\phi_{LQ,1}(O;\alpha)-\phi_{UQ,0}(O;\alpha)
\qquad\text{and}\qquad
\phi_{UQ,1}(O;\alpha)-\phi_{LQ,0}(O;\alpha),
\]
respectively. Likewise, for any conditional smooth-bound pair $(\psi_{L,0}(\alpha),\psi_{U,0}(\alpha))$ evaluated at $a=0$, let $m_1=\E[Y\mid A=1]$. The corresponding ETT endpoints
\[
m_1-\psi_{U,0}(\alpha)
\qquad\text{and}\qquad
m_1-\psi_{L,0}(\alpha)
\]
have EIFs
\[
\frac{I(A=1)}{p(A=1)}(Y-m_1)-\phi_{\psi_{U,0}}(O;\alpha)
\qquad\text{and}\qquad
\frac{I(A=1)}{p(A=1)}(Y-m_1)-\phi_{\psi_{L,0}}(O;\alpha).
\]

For any one of the smooth endpoints $\Theta(P)$ above, its EIF $\phi_\Theta(O;\eta)$ may be used in a cross-fitted one-step estimator
\[
\widehat\Theta_{\mathrm{os}}
=\frac{1}{n}\sum_{i=1}^n
\left\{\Theta(\widehat P_{-k(i)})
+\phi_\Theta(O_i;\widehat\eta_{-k(i)})\right\}.
\]
If the nuisance estimators make the second-order remainder $o_p(n^{-1/2})$ and the estimated EIF converges in $L_2(P)$, then the estimator is asymptotically linear with influence function $\phi_\Theta$, and the empirical variance of the estimated EIF yields an analytic standard error. These additional rate conditions are required for one-step inference; pathwise differentiability alone is not sufficient.

For implementation, define the cross-fitted estimated influence values
\[
\widehat\phi_{\Theta,i}
=
\phi_\Theta(O_i;\widehat\eta_{-k(i)}),
\qquad
\overline{\widehat\phi}_{\Theta}
=
\frac{1}{n}\sum_{i=1}^n\widehat\phi_{\Theta,i},
\]
where all nuisance quantities entering the EIF are estimated using the training sample excluding fold $k(i)$. Estimate the asymptotic variance and standard error by
\[
\widehat\sigma_\Theta^2
=
\frac{1}{n-1}\sum_{i=1}^n
\left(
\widehat\phi_{\Theta,i}
-\overline{\widehat\phi}_{\Theta}
\right)^2,
\qquad
\widehat{\operatorname{se}}
\left(\widehat\Theta_{\mathrm{os}}\right)
=
\frac{\widehat\sigma_\Theta}{\sqrt n}.
\]
Thus, an endpoint-wise Wald $95\%$ confidence interval for $\Theta(P)$ is
\[
\left[
\widehat\Theta_{\mathrm{os}}
-z_{0.975}\frac{\widehat\sigma_\Theta}{\sqrt n},
\;
\widehat\Theta_{\mathrm{os}}
+z_{0.975}\frac{\widehat\sigma_\Theta}{\sqrt n}
\right],
\]
where $z_\beta$ denotes the $\beta$-quantile of the standard normal
distribution. More generally, for any smooth bound pair
$(\Theta_L(P),\Theta_U(P))$ and any causal parameter $\theta(P)$ satisfying
$\Theta_L(P)\leq\theta(P)\leq\Theta_U(P)$, the analytic EIF analogue of
the basic-bootstrap interval above is
\[
C_n^{\mathrm{EIF}}
=
\left[
\widehat\Theta_{L,\mathrm{os}}
-z_{0.975}\frac{\widehat\sigma_L}{\sqrt n},
\;
\widehat\Theta_{U,\mathrm{os}}
+z_{0.975}\frac{\widehat\sigma_U}{\sqrt n}
\right],
\]
where $\widehat\sigma_L^2$ and $\widehat\sigma_U^2$ are computed from the
corresponding estimated EIFs. Under the preceding one-step conditions
and consistent variance estimation, each of the two relevant one-sided
errors has asymptotic probability at most $0.025$. Hence, by Bonferroni's
inequality, $C_n^{\mathrm{EIF}}$ contains the entire population
smooth-bound interval with asymptotic probability at least $0.95$, and
therefore covers $\theta(P)$ with at least that probability. The same
construction applies to the conditional-mean, ATE, and ETT endpoint
pairs above, using their corresponding combined EIFs.

\section{Simulation Studies}
\label{sec:sims}

In this section, we provide simulation results to demonstrate the performance of our proposed estimators. We designed the data generating process for variables $(U,X,W,Z,A,Y)$ as follows. 
\begin{itemize}
\item $p(U,X)$: We chose $|\mathcal{U}|\times|\mathcal{X}|$ parameters for the joint distribution $p(U,X)$ uniformly from $Unif[0.1,1]$, and normalized them to sum up to one.
\item $p(W\mid U,X)$: For any $w,u,x$, 
\[
p(W=w\mid U=u,X=x)=\frac{\exp(\beta_{w,0}+\beta_{w,U}u+\beta_{w,X}x)}{\sum_{w\in\mathcal{W}}\exp(\beta_{w,0}+\beta_{w,U}u+\beta_{w,X}x)}.
\] 
\item $p(Z\mid U,X)$: For any $z,u,x$,
\[
p(Z=z\mid U=u,X=x)=\frac{\exp(\beta_{z,0}+\beta_{z,U}u+\beta_{z,X}x)}{\sum_{z\in\mathcal{Z}}\exp(\beta_{z,0}+\beta_{z,U}u+\beta_{z,X}x)}.
\]
\item $p(A\mid U,X,Z)$: For any $a,u,x,z$,
\[
p(A=a\mid U=u,X=x,Z=z)=\frac{\exp(\beta_{a,0}+\beta_{a,U}u+\beta_{a,X}x+\beta_{a,Z}z)}{\sum_{a\in\mathcal{A}}\exp(\beta_{a,0}+\beta_{a,U}u+\beta_{a,X}x+\beta_{a,Z}z)}.
\]
\item $p(Y\mid U,X,W,A)$: For any $y,u,x,w,a$, 
\[
p(Y=y\mid U=u,X=x,W=w,A=a)=\frac{\exp(\beta_{y,0}+\beta_{y,U}u+\beta_{y,X}x+\beta_{y,W}w+\beta_{y,A}a)}{\sum_{y\in\mathcal{Y}}\exp(\beta_{y,0}+\beta_{y,U}u+\beta_{y,X}x+\beta_{y,Z}z+\beta_{y,A}a)}.
\] 
\end{itemize}
In all the conditional distributions above, coefficients $\beta_{\cdot,\cdot}$ are chosen according to uniform distribution $Unif[-0.5,0.5]$. We require the coefficients connecting $U$ to $W$ and $Z$ to be non-zero to ensure that $W$ and $Z$ are relevant to the latent confounder $U$.

\subsection{Simulation Study 1}
We first considered the case that $|\mathcal{U}|=|\mathcal{W}|=|\mathcal{Z}|$. In this case, based on the results of \cite{miao2018identifying} and \cite{shi2020multiply}, the bridge functions $h$ and $q$ exist. Consequently, in conjunction with the properties of our data generating process, we can establish that the assumptions of both Theorems \ref{thm:ETTsingleW} and \ref{thm:ETTsingleZ} are satisfied, and hence our proposed method in Section \ref{sec:est} will yield valid 95\% confidence intervals for the conditional potential outcome mean.

\begin{figure}[t]
\centering
\includegraphics[scale=0.7]{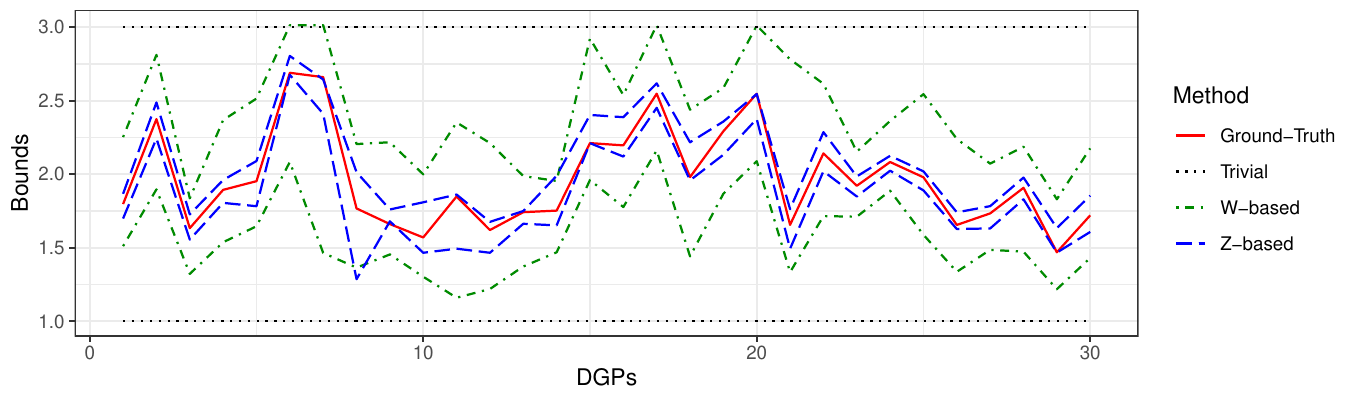}
\caption{Bounds for the conditional potential outcome mean for 30 random choices of the data generating process. Samples size from each data generating process is $n=5000$.}
\label{fig:many}
\end{figure}

\begin{table}[t]
\center
\def~{\hphantom{0}}
\caption{Average bound widths and average confidence interval widths for the $W$-based and $Z$-based methods.}{%
\begin{tabular}{c|cc|cc}
& $Z$-based method & & $W$-based method & \\
\hline
Sample Size & Avg. Bounds Width & Avg. CI Width & Avg. Bounds Width & Avg. CI Width\\
\hline
 3000 & 0.803 & 0.978 & 0.244 & 0.323\\
 4000 & 0.704 & 0.853 & 0.213 & 0.280\\
 5000 & 0.616 & 0.742 & 0.200 & 0.263\\
 6000 & 0.566 & 0.677 & 0.177 & 0.232\\
 7000 & 0.531 & 0.635 & 0.168 & 0.220\\
 8000 & 0.506 & 0.610 & 0.162 & 0.211\\
 9000 & 0.470 & 0.560 & 0.151 & 0.198
\end{tabular}}
\label{table:case1}
\end{table}

We considered $|\mathcal{U}|=|\mathcal{X}|=|\mathcal{W}|=|\mathcal{Z}|=4$, an outcome variable with $|\mathcal{Y}|=3$, and a binary treatment variable.
Table \ref{table:case1} presents the average bound width and the average confidence interval width for the conditional potential outcome mean for the $W$-based and $Z$-based methods, where the average is over 100 iterations (ground-truth value was $1.8$, trivial bounds for the parameter was $[1,3]$). The number of bootstrap replicates in each iteration is $B=500$ and the value of the hyper-parameter $\alpha$ is set to $50$. Notably, the $Z$-based bounds, i.e., the bounds based on Part (b) of Corollary \ref{cor:approx}, exhibit superior performance compared to the $W$-based bounds, i.e., the bounds based on Part (a) of Corollary \ref{cor:approx}. One may wonder whether the efficacy of the proposed method is contingent upon fortuitous realizations of the data generating process. Figure \ref{fig:many} serves to dispel this concern by illustrating that such dependence on chance is not observed. In that figure we observe the bounds for the conditional potential outcome mean for 30 random choices of the data generating process.

\subsection{Simulation Study 2}

\begin{table}[b]
\begin{minipage}{0.45\textwidth}
\centering
\def~{\hphantom{0}}
\caption{Average bound width for the $W$-based method.}{%
\begin{tabular}{c|ccccc}
  & & & $|\mathcal{W}|$ & &\\
 $|\mathcal{U}|$ & 3 & 4 & 5 & 6 & 7\\
\hline
 3 & 0.563 & 0.710 & 0.801 & 1.028 & 1.053\\
 4 & & 0.873 & 1.014 & 1.179 & 1.173\\
 5 & & & 1.148  & 1.213 & 1.330\\
 6 & & & & 1.378 & 1.396\\
 7 & & & & & 1.536
\end{tabular}}
\label{table:case2W}
\end{minipage}%
\begin{minipage}{0.08\textwidth}
~
\end{minipage}%
\begin{minipage}{0.45\textwidth}
\centering
\def~{\hphantom{0}}
\caption{Average bound width for the $Z$-based method.}{%
\begin{tabular}{c|ccccc}
  & & & $|\mathcal{Z}|$ & &\\
 $|\mathcal{U}|$ & 3 & 4 & 5 & 6 & 7\\
\hline
 3 & 0.161 & 0.168 & 0.194 & 0.281 & 0.285\\
 4 & & 0.225 & 0.253 & 0.356 & 0.383\\
 5 & & & 0.295 & 0.360 & 0.430\\
 6 & & & & 0.345 & 0.366\\
 7 & & & & & 0.532
\end{tabular}}
\label{table:case2Z}
\end{minipage}
\end{table}

Next we considered the case that $|\mathcal{W}|$ and $|\mathcal{Z}|$ are larger than or equal to $|\mathcal{U}|$. In this case, based on the results of \cite{shi2020multiply}, the bridge functions $h$ and $q$ exist and hence again the assumptions of Theorems 1 and 2 are satisfied. However, in this case, the bridge functions are not necessarily unique. Additionally, the proxy variables may incorporate information that is not directly pertinent to the latent confounder, resulting in wider bounds. This phenomenon is corroborated by the simulation results depicted in Tables \ref{table:case2W} and \ref{table:case2Z}. Each entry of the table is the width of the estimated bound averaged over 100 random data generating processes. The results are for sample size $n=10000$, $|\mathcal{X}|=5$, $|\mathcal{Y}|=3$, and a binary treatment variable. 

\begin{table}[t]
\caption{Coverage of the $W$-based and $Z$-based methods bounds for different values of $|\mathcal{W}|=|\mathcal{Z}|$.}
\centering
\begin{tabular}{c|cccc}
  & &  $|\mathcal{W}|=|\mathcal{Z}|$  \\
 & 3 & 4 & 5 & 6 \\
\hline
 $W$-based method & 98.4\% & 99.4\% & 100\% & 100\%\\
 $Z$-based method & 62.2\% & 76.2\% & 84.4\% & 92\%
\end{tabular}	
\label{table:sensitivity}
\end{table}

\subsection{Simulation Study 3}

Finally, we considered the case that $|\mathcal{W}|$ and $|\mathcal{Z}|$ are smaller than $|\mathcal{U}|$.
We considered $|\mathcal{U}|=7$, $|\mathcal{X}|=5$, $|\mathcal{Y}|=3$, $|\mathcal{W}|=|\mathcal{Z}|\in\{3,4,5,6\}$, and a binary treatment variable.  In this case, the bridge functions $h$ and $q$ do not exist and hence, the assumptions of Theorems 1 and 2 are violated. We investigated the coverage of our bounds to study the sensitivity of the approach to the existing slight violation of our assumption. To do so, we looked at 500 random data generating processes and investigated in what percentage of them the bounds obtained from a sample of size 10,000 contain the ground truth.  The results are shown in Table \ref{table:sensitivity}. The obtained coverages may demonstrate the robustness of our proposed methods to slight violation of the assumption of existence of bridge functions.

\section{Evaluation of the Effectiveness of Right Heart Catheterization}

In this section, we demonstrate the application of the proposed method to the Study to Understand Prognoses and Preferences for Outcomes and Risks of Treatments (SUPPORT) to evaluation of the effectiveness of right heart catheterization (RHC) in the intensive care unit of critically ill patients \citep{connors1996effectiveness}. The same dataset has been analyzed using proximal framework in \citep{tchetgen2020introduction,cui2023semiparametric} with parametric estimation on nuisance parameters, and in \citep{ghassami2021minimax} with non-parametric estimation of nuisance functions.

Data are available on 5735 individuals, 2184 treated and 3551 controls. In total, 3817 patients survived and 1918 died within 30 days.
The binary treatment variable $A$ is whether RHC is assigned, and the outcome variable $Y$ is the number of days between admission and death or censoring at day 30.
Based on background knowledge, we included the following five binary pre-treatment covariates to adjust for potential confounding: 
\begin{itemize}
	\item $X_1$: Indicator of age above 75; 
	\item $X_2$: Indicator of APACHE score above 40; 
	\item $X_3$: Indicator of estimate of probability of surviving two months above 0.5; 
	\item $X_4$: Indicator that patient has congestive heart failure or acute respiratory failure;
	\item $X_5$: Indicator of hematocrit above 30\%.
\end{itemize}
Based on the superior performance of the $Z$-based method observed in synthetic data evaluations, we only applied that method in our real data analysis. Hence, we only require proxy variable $Z$, for which, following \citep{tchetgen2020introduction,cui2023semiparametric}, we considered two options: 
(1) The status of PaO2/FI02 ratio (PaFI). Specifically, we used the binary variable $Z$ the indicator of PaO2/FI02 ratio above 150.
(2) The status of partial pressure of CO2 (PaCO2). Specifically, we used the binary variable $Z$ the indicator of PaCO2 above 37 mmHg.
The results are summarized in Table \ref{table:RHC}.
As it can be seen in Table \ref{table:RHC}, both choices of the proxy variable $Z$ lead to a negative causal effect of RHC on survival.
The results are consistent with the previous results in the literature on this dataset. Specifically, that of \cite{cui2023semiparametric} (ATE = $-1.66$, 95\% confidence interval = $(-2.50,-0.83)$) and \cite{ghassami2021minimax} (ATE = $-1.70$, 95\% confidence interval = $(-2.17, -1.22)$).

\begin{table}[t]
\centering
\caption{Causal effect estimates  and 95\% confidence intervals for two choices of the proxy variable $Z$.}
\begin{tabular}{c|cc}
 & ATE bounds & 95\% CIs \\
\hline
 PaFI as the proxy variable & $(-2.250,-0.093)$ & $(-2.738 ,  0.403)$\\
 PaCO2  as the proxy variable & $(-2.281, -0.038)$ & $(-2.721 , 0.466)$
\end{tabular}
\label{table:RHC}
\end{table}

\section{Conclusion}

For point identification of causal effects, proximal causal inference requires identification of certain nuisance functions called bridge functions using proxy variables that are sufficiently relevant to the unmeasured confounder, formalized as a completeness condition. However, completeness is not testable, and although a bridge function may exist, lack of completeness may severely limit prospects for identification of a bridge function and thus a causal effect; therefore, restricting the application of the framework. In this work, we proposed partial identification methods that do not require completeness and obviate the need for identification of a bridge function, i.e., we established that proxies can be leveraged to obtain bounds on the causal effect even if available information does not suffice to identify a bridge function. We further established analogous results for mediation analysis when the mediator is unobserved. Since our bounds are non-smooth functionals of the observed data distribution, for inference we proposed smooth lower and upper approximations and derived their efficient influence functions for both conditional and marginal potential outcome means. These results provide the basis for one-step estimators and analytic variance formulas, in addition to bootstrap inference for regular plug-in estimators. We provided detailed simulation results to demonstrate the performance of our proposed methods. Specifically, we observed that the $Z$-based method in many settings provide very informative bounds on the causal effect. We also demonstrated the application of our proposed method to the Study to Understand Prognoses and Preferences for Outcomes and Risks of Treatments (SUPPORT) to evaluation of the effectiveness of right heart catheterization in the intensive care unit of critically ill patients. Our results were consistent with the previous findings in the literature on this dataset.

\bibliography{Refs.bib}

@article{miao2018identifying,
  title={Identifying causal effects with proxy variables of an unmeasured confounder},
  author={Miao, Wang and Geng, Zhi and Tchetgen Tchetgen, Eric J},
  journal={Biometrika},
  volume={105},
  number={4},
  pages={987--993},
  year={2018},
  publisher={Oxford University Press}
}

@misc{hernan2020causal,
  title={Causal inference: what if},
  author={Hern{\'a}n, Miguel A and Robins, James M},
  year={2020},
  publisher={Boca Raton: Chapman \& Hall/CRC}
}

@article{miao2018confounding,
  title={A confounding bridge approach for double negative control inference on causal effects},
  author={Miao, Wang and Shi, Xu and Tchetgen Tchetgen, Eric},
  journal={arXiv preprint arXiv:1808.04945},
  year={2018}
}

@article{tchetgen2020introduction,
  title={An Introduction to Proximal Causal Learning},
  author={Tchetgen Tchetgen, Eric J and Ying, Andrew and Cui, Yifan and Shi, Xu and Miao, Wang},
  journal={arXiv preprint arXiv:2009.10982},
  year={2020}
}

@article{lipsitch2010negative,
  title={Negative controls: a tool for detecting confounding and bias in observational studies},
  author={Lipsitch, Marc and Tchetgen Tchetgen, Eric and Cohen, Ted},
  journal={Epidemiology (Cambridge, Mass.)},
  volume={21},
  number={3},
  pages={383},
  year={2010},
  publisher={NIH Public Access}
}

@article{shi2020multiply,
  title={Multiply robust causal inference with double-negative control adjustment for categorical unmeasured confounding},
  author={Shi, Xu and Miao, Wang and Nelson, Jennifer C and Tchetgen Tchetgen, Eric J},
  journal={Journal of the Royal Statistical Society: Series B (Statistical Methodology)},
  volume={82},
  number={2},
  pages={521--540},
  year={2020},
  publisher={Wiley Online Library}
}

@book{wainwright2019high,
  title={High-dimensional statistics: A non-asymptotic viewpoint},
  author={Wainwright, Martin J},
  volume={48},
  year={2019},
  publisher={Cambridge University Press}
}

@article{connors1996effectiveness,
  title={The effectiveness of right heart catheterization in the initial care of critically III patients},
  author={Connors, Alfred F and Speroff, Theodore and Dawson, Neal V and Thomas, Charles and Harrell, Frank E and Wagner, Douglas and Desbiens, Norman and Goldman, Lee and Wu, Albert W and Califf, Robert M and others},
  journal={Jama},
  volume={276},
  number={11},
  pages={889--897},
  year={1996},
  publisher={American Medical Association}
}

@article{newey1990semiparametric,
  title={Semiparametric efficiency bounds},
  author={Newey, Whitney K},
  journal={Journal of applied econometrics},
  volume={5},
  number={2},
  pages={99--135},
  year={1990},
  publisher={Wiley Online Library}
}

@article{ghassami2021minimax,
  title={Minimax Kernel Machine Learning for a Class of Doubly Robust Functionals},
  author={Ghassami, AmirEmad and Ying, Andrew and Shpitser, Ilya and Tchetgen, Eric Tchetgen},
  journal={arXiv preprint arXiv:2104.02929},
  year={2021}
}

@article{ghassami2021hidmed,
  title={Causal Inference with Hidden Mediators},
  author={Ghassami, AmirEmad and Yang, Alan and Shpitser, Ilya and Tchetgen Tchetgen, Eric},
  journal={arXiv preprint arXiv:2111.02927},
  year={2021}
}

@article{tchetgen2023single,
  title={Single Proxy Control},
  author={Tchetgen Tchetgen, Eric and Park, Chan and Richardson, David},
  journal={arXiv preprint arXiv:2302.06054},
  year={2023}
}

@article{ghassami2022combining,
  title={Combining experimental and observational data for identification of long-term causal effects},
  author={Ghassami, AmirEmad and Yang, Alan and Richardson, David and Shpitser, Ilya and Tchetgen Tchetgen, Eric},
  journal={arXiv preprint arXiv:2201.10743},
  year={2022}
}

@inproceedings{pearl2001direct,
  title={Direct and indirect effects},
  author={Pearl, Judea},
  booktitle={Proceedings of the 17th Conference on Uncertainty in Artificial Intelligence (pp. 411– 420).},
  year={2001}
}

@book{pearl2009causality,
  title={Causality},
  author={Pearl, Judea},
  year={2009},
  publisher={Cambridge university press}
}

@article{robins1992identifiability,
  title={Identifiability and exchangeability for direct and indirect effects},
  author={Robins, James M and Greenland, Sander},
  journal={Epidemiology},
  pages={143--155},
  year={1992},
  publisher={JSTOR}
}

@article{imai2010general,
  title={A general approach to causal mediation analysis.},
  author={Imai, Kosuke and Keele, Luke and Tingley, Dustin},
  journal={Psychological methods},
  volume={15},
  number={4},
  pages={309},
  year={2010},
  publisher={American Psychological Association}
}

@article{10.1093/jrsssb/qkad020,
    author = {Ying, Andrew and Miao, Wang and Shi, Xu and Tchetgen Tchetgen, Eric J},
    title = "{Proximal causal inference for complex longitudinal studies}",
    journal = {Journal of the Royal Statistical Society Series B: Statistical Methodology},
    year = {2023},
    month = {03},
    issn = {1369-7412},
}

@article{10.1093/biomet/asad015,
    author = {Dukes, Oliver and Shpitser, Ilya and Tchetgen Tchetgen, Eric J},
    title = "{Proximal mediation analysis}",
    journal = {Biometrika},
    year = {2023},
    month = {03},
    issn = {1464-3510},
}

@article{LiQ,
    author = {Li, Kendrick Qijun and Shi, Xu and Miao, Wang and Tchetgen Tchetgen, Eric},
    title = "{Double Negative Control Inference in Test-Negative Design Studies of Vaccine Effectiveness}",
    journal = {Journal of the American Statistical Association: Theory and Methods},
    year = {2023},
}

@article{egami2023identification,
    author = {Egami, Naoki and Tchetgen Tchetgen, Eric J},
    title = "{Identification and estimation of causal peer effects using double negative controls for unmeasured network confounding}",
    journal = {Journal of the Royal Statistical Society Series B: Statistical Methodology},
    year = {2023},
}

@article{shpitser2023proximal,
  title={The proximal {ID} algorithm},
  author={Shpitser, Ilya and Wood-Doughty, Zach and Tchetgen Tchetgen, Eric J},
  journal={Journal of Machine Learning Research},
  year={2023}
}

@article{zhang2023proximal,
  title={Proximal causal inference without uniqueness assumptions},
  author={Zhang, Jeffrey and Li, Wei and Miao, Wang and Tchetgen Tchetgen, Eric},
  journal={Statistics \& Probability Letters},
  pages={109836},
  year={2023},
  publisher={Elsevier}
}

@article{deaner2018proxy,
  title={Proxy controls and panel data},
  author={Deaner, Ben},
  journal={arXiv preprint arXiv:1810.00283},
  year={2018}
}

@article{shi2020selective,
  title={A selective review of negative control methods in epidemiology},
  author={Shi, Xu and Miao, Wang and Tchetgen Tchetgen, Eric},
  journal={Current epidemiology reports},
  volume={7},
  pages={190--202},
  year={2020},
  publisher={Springer}
}

@book{boyd2004convex,
  title={Convex optimization},
  author={Boyd, Stephen P and Vandenberghe, Lieven},
  year={2004},
  publisher={Cambridge university press}
}

@article{calafiore2020universal,
  title={A universal approximation result for difference of log-sum-exp neural networks},
  author={Calafiore, Giuseppe C and Gaubert, Stephane and Possieri, Corrado},
  journal={IEEE transactions on neural networks and learning systems},
  volume={31},
  number={12},
  pages={5603--5612},
  year={2020},
  publisher={IEEE}
}

@article{calafiore2019log,
  title={Log-sum-exp neural networks and posynomial models for convex and log-log-convex data},
  author={Calafiore, Giuseppe C and Gaubert, Stephane and Possieri, Corrado},
  journal={IEEE transactions on neural networks and learning systems},
  volume={31},
  number={3},
  pages={827--838},
  year={2019},
  publisher={IEEE}
}

@book{murphy2012machine,
  title={Machine learning: a probabilistic perspective},
  author={Murphy, Kevin P},
  year={2012},
  publisher={MIT press}
}

@book{goodfellow2016deep,
  title={Deep learning},
  author={Goodfellow, Ian and Bengio, Yoshua and Courville, Aaron},
  year={2016},
  publisher={MIT press}
}

@article{levis2023covariate,
  title={Covariate-assisted bounds on causal effects with instrumental variables},
  author={Levis, Alexander W and Bonvini, Matteo and Zeng, Zhenghao and Keele, Luke and Kennedy, Edward H},
  journal={arXiv preprint arXiv:2301.12106},
  year={2023}
}

@techreport{chernozhukov2012central,
  title={Central limit theorems and multiplier bootstrap when p is much larger than n},
  author={Chernozhukov, Victor and Chetverikov, Denis and Kato, Kengo},
  year={2012},
  institution={cemmap working paper}
}

@article{tchetgen2017general,
  title={A general instrumental variable framework for regression analysis with outcome missing not at random},
  author={Tchetgen Tchetgen, Eric J and Wirth, Kathleen E},
  journal={Biometrics},
  volume={73},
  number={4},
  pages={1123--1131},
  year={2017},
  publisher={Wiley Online Library}
}

@article{cui2023semiparametric,
  title={Semiparametric proximal causal inference},
  author={Cui, Yifan and Pu, Hongming and Shi, Xu and Miao, Wang and Tchetgen Tchetgen, Eric},
  journal={Journal of the American Statistical Association},
  pages={1--12},
  year={2023},
  publisher={Taylor \& Francis}
}

@article{bennett2023minimax,
  title={Minimax Instrumental Variable Regression and $ L\_2 $ Convergence Guarantees without Identification or Closedness},
  author={Bennett, Andrew and Kallus, Nathan and Mao, Xiaojie and Newey, Whitney and Syrgkanis, Vasilis and Uehara, Masatoshi},
  journal={arXiv preprint arXiv:2302.05404},
  year={2023}
}
\bibliographystyle{apalike}

\newpage
~\vspace{5mm}
\begin{center}
{\LARGE \bf Appendix}	
\end{center}
\vspace{10mm}
\appendix
\section*{Proofs}

\begin{proof}[Proof of Theorem \ref{thm:ETTsingleW}]
	
We note that
\begin{align*}
&\E[Y^{(A=a)}\mid A=1-a]\\
&=\E\Big[\E[Y^{(A=a)}\mid A=a,X,U]\Big| A=1-a\Big]\\
&\overset{(a)}{=}\E\Big[\E[Y\mid A=a,X,U]\Big| A=1-a\Big]\\
&\overset{(b)}{=}\E\Big[\E[h(W,a,X)\mid A=a,X,U]\Big| A=1-a\Big]\\
&\overset{(c)}{=}\E\Big[\E[h(W,a,X)\mid A=1-a,X,U]\Big| A=1-a\Big]\\
&=\E\Big[h(W,a,X)\Big| A=1-a\Big]\\
&=\sum_{w,x}h(w,a,x)\frac{p(w\mid A=1-a,x)}{p(w\mid A=a,x)}p(w\mid A=a,x)p(x\mid A=1-a),
\end{align*}
where 
$(a)$ is due to the consistency assumption, 
$(b)$ is due to Assumption \ref{ass:hexistsM1},
and $(c)$ is due to Assumption \ref{ass:indepM1}.
Therefore, 
\begin{align*}
&\sum_{x}\min_w\frac{p(w\mid A=1-a,x)}{p(w\mid A=a,x)}\sum_{w}h(w,a,x)p(w\mid A=a,x)p(x\mid A=1-a)\\
&\le\E[Y^{(A=a)}\mid A=1-a]\le\\
&\sum_{x}\max_w\frac{p(w\mid A=1-a,x)}{p(w\mid A=a,x)}\sum_{w}h(w,a,x)p(w\mid A=a,x)p(x\mid A=1-a).
\end{align*}
Note that Assumption \ref{ass:hexistsM1} implies that almost surely
\begin{equation*}
\E[Y\mid A=a,X]=\E[h(W,a,X)\mid A=a,X].	
\end{equation*}

Therefore,
\begin{align*}
&\sum_{x}\min_w\frac{p(w\mid A=1-a,x)}{p(w\mid A=a,x)}\sum_{y}yp(y\mid A=a,x)p(x\mid A=1-a)\\
&\le\E[Y^{(A=a)}\mid A=1-a]\le\\
&\sum_{x}\max_w\frac{p(w\mid A=1-a,x)}{p(w\mid A=a,x)}\sum_{y}yp(y\mid A=a,x)p(x\mid A=1-a).
\end{align*}
Combining the above bounds with the trivial bounds that $\inf\mathcal{Y}\le\E[Y^{(A=a)}\mid A=1-a]\le\sup\mathcal{Y}$ leads to the desired result.

\end{proof}

\begin{proof}[Proof of Corollary \ref{cor:ETTtoATE}]

We note that 
\[
\E[Y^{(a)}]=\E[Y^{(a)}\mid A=1-a]p(A=1-a)+\E[Y^{(a)}\mid A=a]p(A=a).
\]
Therefore, the following concludes the upper bound in the corollary.
\begin{align*}
&\sum_{x}\max_w\frac{p(w\mid A=1-a,x)}{p(w\mid A=a,x)}\sum_{y}yp(y\mid A=a,x)p(x\mid A=1-a)p(A=1-a)\\
&\qquad+\E[Y\mid A=a]p(A=a)\\
&=\sum_{x}\max_w\frac{p(A=1-a\mid w,x)}{p(A=a\mid w,x)}\frac{p(A=a\mid x)}{p(A=1-a\mid x)}\sum_{y}yp(y\mid A=a,x)p(x\mid A=1-a)p(A=1-a)\\
&\qquad+\E[Y\mid A=a]p(A=a)\\
&=\sum_{x}\max_w\frac{p(A=1-a\mid w,x)}{p(A=a\mid w,x)}\sum_{y}yp(y\mid A=a,x)p(x, A=a)+\E[Y\mid A=a]p(A=a)\\
&=\sum_{x}\Big\{1+\max_w\frac{p(A=1-a\mid w,x)}{p(A=a\mid w,x)}\Big\}\sum_{y}yp(y,A=a,x)\\
&=\sum_{x}\Big\{1+\max_w\frac{1}{p(A=a\mid w,x)}-1\Big\}\sum_{y}yp(y,A=a,x)\\
&=\sum_{x}\max_w\frac{1}{p(A=a\mid w,x)}\sum_{y}yp(y,A=a,x),\end{align*}

The lower bound can be proved similarly.

\end{proof}

\begin{proof}[Proof of Theorem \ref{thm:ETTsingleZ}]

We note that
\begin{align*}
&\E\big[Y^{(A=a)}\mid A=1-a]\\
&=\E\big[\E[Y^{(A=a)}\mid A=a,X,U]\mid A=1-a\big]\\
&\overset{(a)}{=}\E\big[\E[Y\mid A=a,X,U]\mid A=1-a\big]\\
&=\sum_{y,x,u}yp(y\mid A=a,x,u)\frac{p(u\mid x,A=1-a)}{p(u\mid x,A=a)}\frac{p(x\mid A=1-a)}{p(x\mid A=a)}p(x,u\mid A=a)\\
&\overset{(b)}{=}\sum_{z,y,x,u}yq(z,a,x)p(y\mid a,x,u)p(z\mid a,x,u)\frac{p(x\mid A=1-a)}{p(x\mid A=a)}p(x,u\mid A=a)\\
&\overset{(c)}{=}\sum_{z,y,x,u}yq(z,a,x)p(z,y,x,u\mid A=a)\frac{p(x\mid A=1-a)}{p(x\mid A=a)}\\
&=\sum_{z,y,x}yq(z,a,x)p(z,y\mid x,A=a)p(x\mid A=1-a)\\
&=\sum_{z,x}\E[Y\mid A=a,z,x]q(z,a,x)p(z\mid x,A=a)p(x\mid A=1-a),
\end{align*}
where 
$(a)$ is due to the consistency assumption,
$(b)$ is due to Assumption \ref{ass:qexistsM2},
and $(c)$ is due to Assumption \ref{ass:indepM2}. Therefore,
\begin{align*}
&\sum_{x}\min_z\E[Y\mid z,x,A=a]\sum_zq(z,a,x)p(z\mid A=a,x)p(x\mid A=1-a)\\
&\le\E\big[Y^{(A=a)}\mid A=1-a]\le\\
&\sum_{x}\max_z\E[Y\mid z,x,A=a]\sum_zq(z,a,x)p(z\mid A=a,x)p(x\mid A=1-a).\end{align*}

Note that Assumption \ref{ass:qexistsM2} implies that almost surely
\begin{equation*}
\begin{aligned}
\E[q(Z,&A,X)\mid A,X]\\
&=\sum_u\E[q(Z,A,X)\mid A,X,u]p(u\mid A,X)=\sum_up(u\mid 1-A,X)=1.
\end{aligned}
\end{equation*}
Therefore, 
\begin{align*}
&\sum_{x}\min_z\E[Y\mid z,x,A=a]p(x\mid A=1-a)\\
&\le\E\big[Y^{(A=a)}\mid A=1-a]\le\\
&\sum_{x}\max_z\E[Y\mid z,x,A=a]p(x\mid A=1-a).\end{align*}

\end{proof}

\begin{proof}[Proof of Theorem \ref{thm:ETTdouble}]

We note that
\begin{align*}
&\E[Y^{(A=a)}\mid A=1-a]\\
&\overset{(a)}{=}\sum_{y,x,u}yp(y\mid a,x,u)p(x,u\mid A=1-a)\\
&\overset{(b)}{=}\sum_{w,x,u}h(w,a,x)p(w\mid a,x,u)p(x,u\mid A=1-a)\\
&=\sum_{w,x,u}h(w,a,x)p(w\mid a,x,u)\frac{p(u\mid A=1-a,x)}{p(u\mid A=a,x)}\frac{p(x\mid A=1-a)}{p(x\mid A=a)}p(x,u\mid A=a)\\
&\overset{(c)}{=}\sum_{w,z,x,u}h(w,a,x)q(z,a,x)p(w\mid a,x,u)p(z\mid a,x,u)\frac{p(x\mid A=1-a)}{p(x\mid A=a)}p(x,u\mid A=a)\\
&\overset{(d)}{=}\sum_{w,z,x,u}h(w,a,x)q(z,a,x)p(w,z\mid a,x,u)\frac{p(x\mid A=1-a)}{p(x\mid A=a)}p(x,u\mid A=a)\\
&=\sum_{w,z,x}h(w,a,x)q(z,a,x)p(w,z\mid a,x)p(x\mid A=1-a),
\end{align*}
where 
$(a)$ is due to Assumption \ref{ass:usual},
$(b)$ is due to Assumption \ref{ass:hexistsM1},
$(c)$ is due to Assumption \ref{ass:qexistsM2}.
and $(d)$ is due to Assumption \ref{ass:M5}.
Therefore, 
\begin{align*}
&\sum_{x}\min_{w,z}\frac{p(w,z\mid a,x)}{p(w\mid a,x)p(z\mid a,x)}\sum_wh(w,a,x)p(w\mid a,x)\sum_zq(z,a,x)p(z\mid a,x)p(x\mid A=1-a)\\
&\le\E[Y^{(A=a)}\mid A=1-a]\le\\
&\sum_{x}\max_{w,z}\frac{p(w,z\mid a,x)}{p(w\mid a,x)p(z\mid a,x)}\sum_wh(w,a,x)p(w\mid a,x)\sum_zq(z,a,x)p(z\mid a,x)p(x\mid A=1-a).
\end{align*}
Note that
\[
\sum_wh(w,a,x)p(w\mid a,x)=\sum_yyp(y\mid a,x),
\]
and
\[
\sum_zq(z,a,x)p(z\mid a,x)=1.
\]
Therefore,
\begin{align*}
&\E\Big[\min_{w,z}\frac{p(w,z\mid a,X)}{p(w\mid a,X)p(z\mid a,X)}\E[Y\mid A=a,X]\Big|A=1-a\Big]\\
&\le\E[Y^{(A=a)}\mid A=1-a]\le\\
&\E\Big[\max_{w,z}\frac{p(w,z\mid a,X)}{p(w\mid a,X)p(z\mid a,X)}\E[Y\mid A=a,X]\Big|A=1-a\Big].
\end{align*}
Combining the above bounds with the trivial bounds that $\inf\mathcal{Y}\le\E[Y^{(A=a)}\mid A=1-a]\le\sup\mathcal{Y}$ leads to the desired result.
\end{proof}

\begin{proof}[Proof of Theorem \ref{thm:Mediation}]
We note that
\begin{align*}
&\E[Y^{(1,M^{(0)})}]\\
&=\E\Big[\E\big[\E[Y\mid A=1,X,M]\big|A=0,X\big]\Big]\\
&=\E\Big[\E\big[\E[h(W,X)\mid X,M]\big|A=0,X\big]\Big]\\
&=\E\Big[\E\big[\E[h(W,X)\mid A=0,X,M]\big|A=0,X\big]\Big]\\
&=\E\Big[\E[h(W,X)\mid A=0,X]\Big]\\
&=\E\Big[\E[\frac{p(W\mid A=0,X)}{p(W\mid A=1,X)}h(W,X)\mid A=1,X]\Big].
\end{align*}
Therefore,
\begin{align*}
&\E\Big[\min_w\frac{p(w\mid A=0,X)}{p(w\mid A=1,X)}\E[h(W,X)\mid A=1,X]\Big]\\
&\le\E[Y^{(1,M^{(0)})}]\le\\
&\E\Big[\max_w\frac{p(w\mid A=0,X)}{p(w\mid A=1,X)}\E[h(W,X)\mid A=1,X]\Big].
\end{align*}
Note that Assumption \ref{ass:hexistsMNIE} implies that almost surely
\[
\E[Y\mid A=1,X]=\E[h(W,X)\mid A=1,X].
\]
Therefore, 
\begin{align*}
&\E\Big[\min_w\frac{p(w\mid A=0,X)}{p(w\mid A=1,X)}\E[Y\mid A=1,X]\Big]\\
&\le\E[Y^{(1,M^{(0)})}]\le\\
&\E\Big[\max_w\frac{p(w\mid A=0,X)}{p(w\mid A=1,X)}\E[Y\mid A=1,X]\Big].
\end{align*}
Combining the above bounds with the trivial bounds that $\inf\mathcal{Y}\le\E[Y^{(1,M^{(0)})}]\le\sup\mathcal{Y}$ leads to the desired result.
	
\end{proof}

\begin{proof}[Proof of Theorem \ref{thm:FrontDoor}]
	
We note that
\begin{align*}
&\E[Y^{(A=a)}]\\
&=\E[I(A=a)Y]+\sum_{m,y,x}yp(y\mid A=1-a,m,x)p(m\mid A=a,x)p(A=1-a\mid x)p(x)\\
&=\E[I(A=a)Y]+\sum_{m,w,x}h(w,1-a,x)p(w\mid 1-a,m,x)p(m\mid a,x)p(1-a\mid x)p(x)\\
&=\E[I(A=a)Y]+\sum_{m,w,x}h(w,1-a,x)p(w\mid a,m,x)p(m\mid a,x)p(1-a\mid x)p(x)\\
&=\E[I(A=a)Y]+\sum_{w,x}h(w,1-a,x)p(w\mid a,x)p(1-a\mid x)p(x)\\
&=\E[I(A=a)Y]+\sum_{w,x}\frac{p(w\mid a,x)}{p(w\mid 1-a,x)}h(w,1-a,x)p(w\mid 1-a,x)p(1-a\mid x)p(x).
\end{align*}	
Therefore,
\begin{align*}
&\E[I(A=a)Y]+\sum_{x}\min_w\frac{p(w\mid a,x)}{p(w\mid 1-a,x)}\sum_wh(w,1-a,x)p(w\mid 1-a,x)p(1-a\mid x)p(x)\\
&\le\E[Y^{(A=a)}]\le\\
&\E[I(A=a)Y]+\sum_{x}\max_w\frac{p(w\mid a,x)}{p(w\mid 1-a,x)}\sum_wh(w,1-a,x)p(w\mid 1-a,x)p(1-a\mid x)p(x).
\end{align*}
Note that Assumption \ref{ass:hexistsFD} implies that almost surely
\[
\E[Y\mid A,X]=\E[h(W,A,X)\mid A,X].
\]
Therefore, 
\begin{align*}
&\E[I(A=a)Y]+\sum_{x,y}\min_w\frac{p(w\mid a,x)}{p(w\mid 1-a,x)}yp(y,1-a,x)\\
&\le\E[Y^{(A=a)}]\le\\
&\E[I(A=a)Y]+\sum_{x,y}\max_w\frac{p(w\mid a,x)}{p(w\mid 1-a,x)}yp(y,1-a,x),
\end{align*}	
Combining the above bounds with the trivial bounds based on the support of $Y$ leads to the desired result.

\end{proof}

\begin{proof}[Proof of Corollary \ref{cor:approx}]
For part (a), the LSE inequalities give, almost surely,
\[
r_W(X;-\alpha)\le\min_{w\in\mathcal{W}}v_{w,X},
\qquad
r_W(X;\alpha)\ge\max_{w\in\mathcal{W}}v_{w,X}.
\]
Because $Y\ge0$ almost surely, $\mu_a(X)\ge0$, and hence
\begin{align*}
C(-\alpha)
&\le\E\left[\min_w v_{w,X}\mu_a(X)\mid A=1-a\right],\\
C(\alpha)
&\ge\E\left[\max_w v_{w,X}\mu_a(X)\mid A=1-a\right].
\end{align*}
Theorem \ref{thm:ETTsingleW} implies
\begin{align*}
\max\left\{\inf\mathcal{Y},
\E[\min_wv_{w,X}\mu_a(X)\mid A=1-a]\right\}
&\le\E[Y^{(A=a)}\mid A=1-a],\\
\E[Y^{(A=a)}\mid A=1-a]
&\le
\min\left\{\sup\mathcal{Y},
\E[\max_wv_{w,X}\mu_a(X)\mid A=1-a]\right\}.
\end{align*}
Finally, for any real numbers $x$ and $y$,
\[
LSE(\{x,y\};\alpha)-\frac{\log2}{\alpha}\le\max\{x,y\}
\]
and
\[
LSE(\{x,y\};-\alpha)+\frac{\log2}{\alpha}\ge\min\{x,y\}.
\]
Applying these two inequalities to $(\inf\mathcal{Y},C(-\alpha))$ and
$(\sup\mathcal{Y},C(\alpha))$, respectively, proves part (a).

For part (b),
\[
r_Z(X;-\alpha)\le\min_{z\in\mathcal{Z}}\mu_{a,z}(X),
\qquad
r_Z(X;\alpha)\ge\max_{z\in\mathcal{Z}}\mu_{a,z}(X).
\]
Taking conditional expectations given $A=1-a$ and applying Theorem
\ref{thm:ETTsingleZ} proves the result.
\end{proof}

\begin{proof}[Proof of Corollary \ref{cor:approxalt}]
Since $u_{w,X}=v_{w,X}\mu_a(X)$,
\[
t_W(X;-\alpha)\le\min_{w\in\mathcal{W}}u_{w,X},
\qquad
t_W(X;\alpha)\ge\max_{w\in\mathcal{W}}u_{w,X}.
\]
Therefore,
\begin{align*}
D(-\alpha)
&\le\E[\min_wv_{w,X}\mu_a(X)\mid A=1-a],\\
D(\alpha)
&\ge\E[\max_wv_{w,X}\mu_a(X)\mid A=1-a].
\end{align*}
Combining these inequalities with Theorem \ref{thm:ETTsingleW} and the
two-point LSE inequalities used in the preceding proof yields the claim.
\end{proof}

\begin{proof}[Proof of Corollary \ref{cor:approxmarginal}]
For part (a), the definition of the positive soft extremum and the LSE
inequalities imply
\[
s_W(X;\alpha)\ge\max_{w\in\mathcal{W}}q_w(X)
\]
and
\[
0<s_W(X;-\alpha)\le\min_{w\in\mathcal{W}}q_w(X).
\]
Equivalently, these facts follow directly from
\[
s_W(X;\alpha)
=\left\{\sum_wq_w(X)^\alpha\right\}^{1/\alpha},
\qquad
s_W(X;-\alpha)
=\left\{\sum_wq_w(X)^{-\alpha}\right\}^{-1/\alpha}.
\]
Since $Y\ge0$,
\begin{align*}
C_3(\alpha)
&\le
\E\left[\frac{I(A=a)Y}{\max_wq_w(X)}\right],\\
C_3(-\alpha)
&\ge
\E\left[\frac{I(A=a)Y}{\min_wq_w(X)}\right].
\end{align*}
Corollary \ref{cor:ETTtoATE} now gives
\[
\max\{C_1,C_3(\alpha)\}
\le\E[Y^{(A=a)}]
\le\min\{C_2,C_3(-\alpha)\}.
\]
Applying the two-point LSE inequalities proves part (a). Notice that the
strict positivity of $s_W(X;-\alpha)$ is what justifies taking its reciprocal;
unlike the ordinary additive LSE soft minimum, the positive soft extremum
has this property for every $\alpha>0$.

For part (b), consistency gives
\[
\E[Y^{(A=a)}]
=p(A=1-a)\E[Y^{(A=a)}\mid A=1-a]+\E[I(A=a)Y].
\]
Applying part (b) of Corollary \ref{cor:approx} to the conditional mean in
this display proves the result.
\end{proof}

For the remaining proofs, let $\{P_\varepsilon:
\varepsilon\in(-\delta,\delta)\}$ be a regular parametric submodel through
the true observed-data law $P=P_0$, and let $S(O)$ be its score at
$\varepsilon=0$. A mean-zero square-integrable function $\phi_\Theta(O)$ is
an influence function for $\Theta(P)$ if
\[
\left.\frac{\partial}{\partial\varepsilon}
\Theta(P_\varepsilon)\right|_{\varepsilon=0}
=\E[\phi_\Theta(O)S(O)].
\]
Because the observed-data model is nonparametric, an influence function in
the nonparametric tangent space is the efficient influence function.

\begin{proof}[Proof of Theorem \ref{thm:eifapprox}]
We first consider the $W$-based functionals. For $c\in\{a,1-a\}$, the
pathwise derivatives of the relevant conditional nuisance functions satisfy
\begin{align*}
\dot\mu_a(X)
&=\E\left[
\frac{I(A=a)}{p(a\mid X)}\{Y-\mu_a(X)\}S(O)
\ \middle|\ X\right],\\
\dot p_c(w\mid X)
&=\E\left[
\frac{I(A=c)}{p(c\mid X)}
\{I(W=w)-p_c(w\mid X)\}S(O)
\ \middle|\ X\right].
\end{align*}
Here and below, a dot denotes the derivative along the submodel at zero.
The derivative of the LSE map is
\[
\frac{\partial r_W(X;\gamma)}{\partial v_{w,X}}
=\omega_w^W(X;\gamma),
\]
where
\[
\dot v_{w,X}
=\frac{\dot p_{1-a}(w\mid X)}{p_a(w\mid X)}
-\frac{p_{1-a}(w\mid X)}{p_a(w\mid X)^2}
\dot p_a(w\mid X).
\]
Writing
\[
C(\gamma)
=\E\left[
\frac{p(1-a\mid X)}{p(A=1-a)}
r_W(X;\gamma)\mu_a(X)\right],
\]
the derivative of the outer conditional law contributes
\[
\E\left[
\frac{I(A=1-a)}{p(A=1-a)}
\{r_W(X;\gamma)\mu_a(X)-C(\gamma)\}S(O)\right].
\]
The remaining derivative is
\[
\E\left[
\frac{p(1-a\mid X)}{p(A=1-a)}
\left\{r_W(X;\gamma)\dot\mu_a(X)
+\mu_a(X)\sum_w\omega_w^W(X;\gamma)\dot v_{w,X}\right\}
\right].
\]
Substituting the nuisance derivatives into this display and applying
iterated expectations gives
\[
\dot C(\gamma)=\E[\eta_{C,\gamma}(O)S(O)].
\]
The two outer LSE derivatives are
\[
\frac{\partial\psi_{LW}(\alpha)}{\partial C(-\alpha)}
=L_1(\alpha),
\qquad
\frac{\partial\psi_{UW}(\alpha)}{\partial C(\alpha)}
=L_2(\alpha).
\]
The chain rule therefore gives the two $W$-based EIFs stated in the theorem.

For the $Z$-based functionals,
\[
\dot\mu_{a,z}(X)
=\E\left[
\frac{I(A=a,Z=z)}{p(A=a,Z=z\mid X)}
\{Y-\mu_{a,z}(X)\}S(O)
\ \middle|\ X\right]
\]
and
\[
\frac{\partial r_Z(X;\gamma)}{\partial\mu_{a,z}(X)}
=\omega_z^Z(X;\gamma).
\]
Differentiating
\[
\psi_Z(\gamma)
=\E[r_Z(X;\gamma)\mid A=1-a]
\]
gives one contribution from the outer conditional distribution and one from
the nuisance regressions:
\begin{align*}
\dot\psi_Z(\gamma)
={}&\E\left[
\frac{I(A=1-a)}{p(A=1-a)}
\{r_Z(X;\gamma)-\psi_Z(\gamma)\}S(O)\right]\\
&+\E\left[
\frac{p(1-a\mid X)}{p(A=1-a)}
\sum_z\omega_z^Z(X;\gamma)\dot\mu_{a,z}(X)\right].
\end{align*}
Substitution of the displayed derivative of $\mu_{a,z}$ yields
\[
\dot\psi_Z(\gamma)
=\E[\eta_{\psi_Z,\gamma}(O)S(O)],
\]
which proves the result after setting $\gamma=-\alpha$ and $\gamma=\alpha$.
\end{proof}

\begin{proof}[Proof of Theorem \ref{thm:eifapproxalt}]
The derivative of
\[
D(\gamma)=\E[t_W(X;\gamma)\mid A=1-a]
\]
with respect to its outer conditional law contributes
\[
\E\left[
\frac{I(A=1-a)}{p(A=1-a)}
\{t_W(X;\gamma)-D(\gamma)\}S(O)\right].
\]
Moreover,
\[
\frac{\partial t_W(X;\gamma)}{\partial u_{w,X}}
=\widetilde\omega_w^W(X;\gamma)
\]
and
\[
\dot u_{w,X}
=v_{w,X}\dot\mu_a(X)+\mu_a(X)\dot v_{w,X}.
\]
Consequently, the derivative through the nuisance functions equals
\[
\E\left[
\frac{p(1-a\mid X)}{p(A=1-a)}
\sum_w\widetilde\omega_w^W(X;\gamma)
\{v_{w,X}\dot\mu_a(X)+\mu_a(X)\dot v_{w,X}\}
\right].
\]
Using the nuisance derivatives displayed in the proof of Theorem
\ref{thm:eifapprox} and applying iterated expectations gives
\[
\dot D(\gamma)=\E[\eta_{D,\gamma}(O)S(O)].
\]
Finally,
\[
\frac{\partial\widetilde\psi_{LW}(\alpha)}{\partial D(-\alpha)}
=\widetilde L_1(\alpha),
\qquad
\frac{\partial\widetilde\psi_{UW}(\alpha)}{\partial D(\alpha)}
=\widetilde L_2(\alpha).
\]
The chain rule proves the theorem.
\end{proof}

\begin{proof}[Proof of Theorem \ref{thm:eifmarginal}]
For the $W$-based bounds, the EIFs of $C_1$ and $C_2$ are
\[
\eta_1(O)=\inf\mathcal{Y}\,I(A=1-a)+I(A=a)Y-C_1
\]
and
\[
\eta_2(O)=\sup\mathcal{Y}\,I(A=1-a)+I(A=a)Y-C_2,
\]
respectively. It remains to differentiate $C_3(\gamma)$. First, the positive
soft extremum satisfies
\[
\frac{\partial s_W(X;\gamma)}{\partial q_w(X)}
=s_W(X;\gamma)\frac{\lambda_w(X;\gamma)}{q_w(X)}.
\]
The pathwise derivative of $q_w(X)=p(A=a\mid W=w,X)$ is
\[
\dot q_w(X)
=\E\left[
\frac{I(W=w)}{p(w\mid X)}
\{I(A=a)-q_w(X)\}S(O)
\ \middle|\ X\right].
\]
The direct variation of the outer law in
\[
C_3(\gamma)
=\E\left[\frac{I(A=a)Y}{s_W(X;\gamma)}\right]
\]
contributes
\[
\E\left[
\left\{\frac{I(A=a)Y}{s_W(X;\gamma)}-C_3(\gamma)\right\}S(O)
\right].
\]
The chain rule for the reciprocal of $s_W$ contributes
\[
-\E\left[
\frac{\E[I(A=a)Y\mid X]}{s_W(X;\gamma)^2}
\sum_w\frac{\partial s_W(X;\gamma)}{\partial q_w(X)}
\dot q_w(X)\right].
\]
Substituting the derivatives of $s_W$ and $q_w$ shows that
\[
\dot C_3(\gamma)=\E[\eta_{3,\gamma}(O)S(O)].
\]
In particular, the propensity-score correction has a minus sign because
the functional contains $1/s_W$.

The derivatives of the two outer LSE maps with respect to their first
arguments are $K_1(\alpha)$ and $K_2(\alpha)$; their derivatives with respect
to their second arguments are $1-K_1(\alpha)$ and $1-K_2(\alpha)$.
Combining the EIFs of $C_1$, $C_2$, and $C_3$ proves the $W$-based claims.

For the $Z$-based bounds, write
\[
\varphi_Z(\gamma)
=M_Z(\gamma)+\E[I(A=a)Y]
=\E[I(A=1-a)r_Z(X;\gamma)]+\E[I(A=a)Y].
\]
The direct variation of the two outer expectations contributes
\[
I(A=1-a)r_Z(X;\gamma)-M_Z(\gamma)
+I(A=a)Y-\E[I(A=a)Y].
\]
The only additional variation is through $\mu_{a,z}(X)$. Using
\[
\frac{\partial r_Z(X;\gamma)}{\partial\mu_{a,z}(X)}
=\omega_z^Z(X;\gamma)
\]
and the nuisance derivative from the proof of Theorem
\ref{thm:eifapprox}, this contribution is represented by
\[
\sum_{z\in\mathcal{Z}}
\frac{I(A=a,Z=z)}{p(A=a,Z=z\mid X)}
p(1-a\mid X)\omega_z^Z(X;\gamma)
\{Y-\mu_{a,z}(X)\}.
\]
Thus
\[
\dot\varphi_Z(\gamma)
=\E[\eta_{\varphi_Z,\gamma}(O)S(O)],
\]
and setting $\gamma=-\alpha$ and $\gamma=\alpha$ gives the lower and upper
EIFs, respectively.
\end{proof}

\end{document}